\newcommand{\safemath}[2]{\newcommand{#1}{\ensuremath{#2}\xspace}}
\newcommand{\conditionaltextstyle}{}
\safemath{\bma}{\mathbf{a}}
\safemath{\bmb}{\mathbf{b}}
\safemath{\bme}{\mathbf{e}}
\safemath{\bmn}{\mathbf{n}}
\safemath{\bmr}{\mathbf{r}}
\safemath{\bms}{\mathbf{s}}
\safemath{\bmx}{\mathbf{x}}
\safemath{\bmy}{\mathbf{y}}
\safemath{\bmz}{\mathbf{z}}
\safemath{\bF}{\mathbf{F}}
\safemath{\bH}{\mathbf{H}}
\safemath{\bI}{\mathbf{I}}
\safemath{\bN}{\mathbf{N}}
\safemath{\bQ}{\mathbf{Q}}
\safemath{\setC}{\mathcal{C}}
\safemath{\setN}{\mathcal{N}}
\safemath{\setO}{\mathcal{O}}
\safemath{\bZero}{\mathbf{0}}
\safemath{\opE}{\mathbb{E}}
\safemath{\complexset}{\mathbb{C}}
\safemath{\reals}{\mathbb{R}}
\safemath{\define}{\triangleq}			% definition
\newcommand{\Ex}[2]{\ensuremath{\Exop_{#1}\lefto[#2\right]}} 	% expectation
\safemath{\SNR}{\textit{SNR}} 				% signal to noise ratio
\safemath{\No}{N_0}							% noise spectral density
\safemath{\Es}{E_s}		
\newcommand*{\fancyrefthmlabelprefix}{thm}		% Theorem
\newcommand*{\fancyrefcorlabelprefix}{cor}		% Corollary
\newcommand*{\fancyrefremlabelprefix}{rem}		% Remark
\newcommand*{\fancyreflemlabelprefix}{lem}		% Lemma
\newcommand*{\fancyrefapplabelprefix}{app}		% Appendix
\newcommand*{\fancyrefdeflabelprefix}{def}		% Definition
\newcommand*{\fancyrefalglabelprefix}{alg}		% Algorithm
\newcommand*{\fancyreftbllabelprefix}{tbl}		% Algorithm
\newcommand*{\fancyreftestlabelprefix}{est}		% Estimator
\newcommand*{\fancyrefsyslabelprefix}{sys}		% System Model
\newtheorem{theorem}{Theorem}
\newtheorem{corollary}{Corollary}
\newtheorem{remark}{Remark}
\newtheorem{lemma}{Lemma}
\newtheorem{definition}{Definition}
\newtheorem{estimator}{Estimator}
\newtheorem{system}{System Model}
\renewcommand{\algorithmiccomment}[1]{\bgroup\hfill\scriptsize{//~#1}\egroup}
\DeclareMathOperator{\sign}{sign}			% signum
\newcommand{\abssquared}[1]{|#1|^2} 
\safemath{\Herm}{\textnormal{H}}
\safemath{\CN}{\mathcal{CN}}
\newcommand{\festnoargs}[0]{\eta} 
\newcommand{\fest}[1]{\festnoargs(#1)} 
\newcommand{\median}[0]{\mathsf{m}} 
\newcommand{\samplemedian}[0]{\overline{\median}}
\newcommand{\zvariable}{\bmz}
\newcommand{\zvariablee}{z}
\newcommand{\SURE}[0]{\textit{SURE}}
\newcommand{\Eo}[0]{\textit{MSE}}
\newcommand{\MSE}[0]{\Eo}
\newcommand{\Eh}[0]{E_s/p}
\newcommand{\sampleNo}[0]{\overline{N}_0}
\newcommand{\sampleEs}[0]{\overline{E}_s}
\newcommand{\sampleSNR}[0]{\overline{\SNR}}
\newcommand{\sampleEo}[0]{\overline{\Eo}}
\newcommand{\estimatedNo}[0]{\widehat{N}_0}
\newcommand{\estimatedEs}[0]{\widehat{E}_s}
\newcommand{\estimatedSNR}[0]{\widehat{\SNR}}
\newcommand{\estimatedMSE}[0]{\widehat{\Eo}}
\newcommand{\estimatedp}[0]{\hat p}
\newcommand{\estimatedNosandwich}[0]{\widehat{N}_0(\estimatedp)}
\newcommand{\estimatedNolargeD}[0]{\lim_{D\to\infty}\estimatedNo}
\newcommand{\estimatedNoEM}[0]{\widehat{N}_0^\text{EM}}
\newcommand{\estimatedEsEM}[0]{\widehat{E}_s^\text{EM}}
\newcommand{\estimatedpEM}[0]{\hat p^\text{EM}}
\newcommand{\plimNo}[0]{\estimatedNo\xrightarrow[\protect{\raisebox{3pt}[0pt][0pt]{\ensuremath{ \scriptstyle{D\to\infty}}}}]{\protect{\raisebox{-0.5pt}[0pt][0pt]{\ensuremath{ \scriptstyle{prob.}}}}}}
\newcommand{\Noinit}{\No^\text{init}}
\newcommand{\pinit}{p^\text{init}}
\newcommand{\nitmax}{K^\text{max}}
\newcommand{\tol}{\xi}
\newcommand{\nit}{K}
\newcommand{\pmax}[0]{p^\text{max}}
\newcommand{\vaold}{v_a^\text{old}}
\newcommand{\vbold}{v_b^\text{old}}
\newcommand{\nitidx}{k}
\newcommand{\LB}[0]{\textit{LB}}
\newcommand{\UB}[0]{\textit{UB}}
\newcommand{\zeronorm}{L}
\newcommand{\snonzeros}{\bms^\text{nz}}
\newcommand{\snonzerose}[1]{s^\text{nz}_{#1}}
\newcommand{\U}[0]{U} % symbol for the number of UEs
\newcommand{\B}[0]{D} % symbol for the number of BS antennas
\newcommand{\HML}{\bH^\text{ML}}
\newcommand{\NCE}{\bN^\text{CE}}
\newcommand{\NoCE}{\No^\text{CE}}
\newcommand{\Hbeamspace}{\widetilde{\bH}}
\newcommand{\HMLbeamspace}{\widetilde{\bH}^\text{ML}}
\newcommand{\NCEbeamspace}{\widetilde{\bN}^\text{CE}}
\newcommand{\HMLonebit}{\bH^\text{1-bit ML}}
\newcommand{\HMLonebitbeamspace}{\widetilde{\bH}^\text{1-bit ML}}
\begin{document}

%%%%%%%%%%%%%%%%%%%%%%%%%%%%%%%%%%%%%%%%%%%%%%%%%%%%%%%%%%%%%%%%%%%
%% TITLE %%
%%%%%%%%%%%%%%%%%%%%%%%%%%%%%%%%%%%%%%%%%%%%%%%%%%%%%%%%%%%%%%%%%%%
\title{Low-Complexity Blind Parameter Estimation\\ in Wireless Systems with Noisy Sparse Signals}

\author{Alexandra Gallyas-Sanhueza  and Christoph Studer
\thanks{A.~Gallyas-Sanhueza is with the School of Electrical and Computer Engineering, Cornell University, Ithaca, NY; email: ag753@cornell.edu}
\thanks{C.~Studer is with the Department of Information Technology and Electrical Engineering, ETH Zurich, Zurich, Switzerland; email: studer@ethz.ch}
\thanks{The work of AGS and CS was supported in part by ComSenTer, one of six centers in JUMP, a Semiconductor Research Corporation (SRC) program sponsored by DARPA, and in part by the US National Science Foundation (NSF) under grants CNS-1717559 and ECCS-1824379.}
\thanks{Part of this work was presented at the IEEE International Conference on Communications (ICC) 2021 \cite{gallyas21a}. 
This journal paper extends our work by (i) including a novel parametric noise power estimator with improved accuracy, (ii) evaluating the proposed blind estimators as an initializer for an expectation-maximization algorithm, and (iii) {adding} two applications examples.}
\thanks{MATLAB code to reproduce our simulations {is} available on GitHub:  {\mbox{\protect\url{https://github.com/IIP-Group/blind_and_nonparametric_estimators}}}.}
\thanks{The authors thank Arian Maleki, Ramina Ghods, Charles Jeon, and Seyed Hadi Mirfarshbafan for discussions on signal recovery %parameter estimation 
using SURE, and Haochuan Song for sharing the cell-free system simulator from \cite{song20a}.}
}

\maketitle
%%%%%%%%%%%%%%%%%%%%%%%%%%%%%%%%%%%%%%%%%%%%%%%%%%%%%%%%%%%%%%%%%%%
%%%%%%%%%%%%%%%%%%%%%%%%%%%%%%%%%%%%%%%%%%%%%%%%%%%%%%%%%%%%%%%%%%%

%%%%%%%%%%%%%%%%%%%%%%%%%%%%%%%%%%%%%%%%%%%%%%%%%%%%%%%%%%%%%%%%%%%
%% ABSTRACT %%
%%%%%%%%%%%%%%%%%%%%%%%%%%%%%%%%%%%%%%%%%%%%%%%%%%%%%%%%%%%%%%%%%%%
\begin{abstract}
Baseband processing algorithms often require knowledge of the noise power, signal power, or signal-to-noise ratio (SNR). 
In practice, these parameters are typically unknown and must be estimated.
Furthermore, the mean-square error (MSE) is a desirable metric to be minimized in a variety of estimation and signal recovery algorithms.
However, the MSE cannot directly be used as it depends on the true signal that is generally unknown to the estimator.
In this paper, we propose novel \emph{blind} estimators for the average noise power, average receive signal power, SNR, and MSE.
The proposed estimators can be computed at low complexity and solely rely on the large-dimensional and sparse nature of the processed data. 
Our estimators can be used (i) to quickly track some of the key system parameters while avoiding additional pilot overhead, (ii) to design low-complexity nonparametric algorithms that require such quantities, and (iii) to accelerate more sophisticated estimation or recovery algorithms.
We conduct a theoretical analysis of the proposed estimators for a Bernoulli complex Gaussian (BCG) prior, and we demonstrate their efficacy via synthetic experiments. 
We also provide three application examples that deviate from the BCG prior in millimeter-wave multi-antenna and cell-free wireless systems for which we develop nonparametric denoising algorithms that improve channel-estimation accuracy with a performance comparable to denoisers that assume perfect knowledge of the system parameters.
\end{abstract}

%%%%%%%%%%%%%%%%%%%%%%%%%%%%%%%%%%%%%%%%%%%%%%%%%%%%%%%%%%%%%%%%%%%
%%%%%%%%%%%%%%%%%%%%%%%%%%%%%%%%%%%%%%%%%%%%%%%%%%%%%%%%%%%%%%%%%%%

%%%%%%%%%%%%%%%%%%%%%%%%%%%%%%%%%%%%%%%%%%%%%%%%%%%%%%%%%%%%%%%%%%%
%% INTRODUCTION %%
%%%%%%%%%%%%%%%%%%%%%%%%%%%%%%%%%%%%%%%%%%%%%%%%%%%%%%%%%%%%%%%%%%%
\section{Introduction}
\IEEEPARstart{A}{ccurate} 
 knowledge of system parameters, such as the average noise power, average signal power, and/or signal-to-noise ratio (SNR), is critical in wireless communication systems, as many baseband processing tasks rely on these quantities~\cite{schenk2008rf}.
Virtually all existing wireless systems dedicate training phases to estimate such parameters. 
These training phases  typically  consist of sending \emph{pilots}: signals that are known to the receiver and enable estimation of the desired parameters.
As pilots do not convey information, minimizing the pilot overhead is desirable in practice.
Furthermore, parameter estimation in wireless systems operating at millimeter-wave (mmWave) frequencies must be done frequently, since the propagation conditions can change at fast rates, e.g.,  blockers or interferers may appear or disappear quickly~\cite{rappaport13a}. Thus, it is even more important to reduce the pilot overhead.
In addition, such systems are expected to support several GHz of bandwidth and basestations will consist of a large number of antenna elements. It is  therefore important to develop \emph{low-complexity} solutions that quickly and accurately track such parameters for high-dimensional problems that must be processed at fast rates. 

From a parameter estimation perspective, it is beneficial that many modern wireless communication systems often deal with high-dimensional data. For example, all-digital massive multiple-input multiple-output (MIMO) basestations are expected to be equipped with hundreds of antennas~\cite{rusek14a} or orthogonal frequency-division multiplexing (OFDM) systems will support thousands of subcarriers \cite{3gpp20a}. 
Since many of these high-dimensional signals arising in such systems exhibit structure (e.g., are sparse or are taken from a discrete set), one can design statistical methods that blindly estimate critical parameters without requiring a dedicated training phase. 

In this paper, we focus on noisy observations of signal vectors that are \emph{sparse}, i.e., only few entries carry most of the signals' energy.
Examples of sparse vectors in wireless systems include
(i) the beamspace-domain representation of all-digital mmWave multi-antenna channel vectors~\cite{ghods19a,mirfarshbafan2019beamspace,gallyas20a},
(ii) the delay-domain representation of OFDM channel vectors~\cite{upadhya2016risk}, and 
(iii) the antenna-domain representation of channel vectors in cell-free MIMO wireless systems~\cite{gholamipourfard21a}.
We will explain how sparsity can be exploited to estimate parameters and denoise noisy observations of sparse vectors. 
In Sections \ref{sec:nonparametricestimators}, \ref{sec:theory}, and \ref{sec:synthetic_results}, we decouple our results from wireless communication applications and study the general setting. 
In \fref{sec:channel_denoising}, we apply our estimators and algorithms to three distinct applications in wireless systems.

In what follows, we will use the term ``\emph{blind}'' for estimators that do {not} use any pilots or training sequences and instead rely only on the signal statistics; blind estimators may have tuning parameters.
We will use the term ``\emph{nonparametric}'' for estimators that do not need knowledge of system parameters and do not have parameters that need to be tuned manually; nonparametric estimators may use pilots or training sequences.

\subsection{Prior Art in Blind and Nonparametric Estimation}
Many of the existing blind noise power and SNR estimators exploit modulation-specific structure, such as the cyclic prefix redundancy in OFDM~\cite{socheleau09a,wang10b}, or the periodicity of synchronization sequences \cite{zivkovic09a}.
Expectation-maximization (EM) has also been used for blind noise power or SNR estimation~\cite{Das12a}, and for joint sparse signal recovery and noise power estimation~\cite{huang2020a,wipf04a}.
However, the iterative nature of Bayesian algorithms and EM, and their relatively high per-iteration complexity renders such methods unsuitable for real-time estimation in wireless systems that operate with high-dimensional data at gigabit-per-second sampling rates.
In contrast, we propose low-complexity blind estimators whose complexity only scales with $\setO(D)$, where~$D$ is the dimension of the processed data.
Our proposed low-complexity estimators can also be used as an initialization point to accelerate the convergence of existing EM algorithms.

Joint noise power estimation and sparse signal recovery was investigated  in \cite{chretien14a}; these methods require the choice of algorithm parameters, which affect the estimation accuracy and robustness.
A parameter-free version of sparse signal recovery that combines approximate message passing (AMP)~\cite{donoho2009message,rangan11a} with Stein's unbiased risk estimate (SURE)~\cite{stein81a,tibshirani2015stein} was proposed in \cite{mousavi17a}. 
Similarly, the nonparametric equalizer (NOPE)~\cite{ghods2017optimally} combines AMP with SURE to perform linear minimum mean-square error (MSE) equalization in massive MIMO systems without knowledge of the SNR.
A drawback of such algorithms is the high per-iteration  complexity, which prevents their use in wireless systems supporting large bandwidths and high-dimensional problems (see, e.g.,~\cite{maechler12a,bai12a} for hardware results of sparse signal recovery). We therefore focus on low complexity, blind, and nonparametric algorithms for the fully-determined setting (in contrast to compressive sensing where one has fewer measurements than unknowns), which finds use in many practical situations.
For example, all-digital massive MIMO architectures (which can be as energy efficient as hybrid analog-digital architectures~\cite{yan2019performance,roth18a,skrimponis20a}) and cell-free wireless systems can provide measurement vectors of the same dimension as the sparse signal. In OFDM systems, even though pilots are typically transmitted only on a subset of all subcarriers, interpolation and extrapolation algorithms can be used to extract channel state information on all subcarriers \cite{haene07a}; this also leads to the fully-determined setting that enables the use of our methods.

In this low-complexity setting, the concept of estimating tuning parameters directly from the noisy observations has been used recently for adaptive denoising of mmWave~\cite{ghods19a,mirfarshbafan2019beamspace,gallyas20a} or OFDM~\cite{upadhya2016risk} channel vectors.
Such denoising algorithms typically require a tuning parameter: the denoising threshold. While SURE can be used to automatically determine the MSE-optimal denoising threshold, it still requires knowledge of the noise power.
In contrast to such results, we propose low-complexity blind estimators, which enable the design of nonparametric (i.e., parameter free) channel-vector denoising algorithms that deliver comparable performance to methods that assume perfect knowledge of the required parameters (e.g., the noise power).

Blind nonparametric algorithms have been proposed for denoising {of} real-valued signals. The authors in~\cite{donoho94} have used power estimation methods based on the median absolute deviation (MAD) of real-valued signals for wavelet denoising. 
The Python wavelet toolbox PyYAWT~\cite{pyyawt} includes MAD-based power estimation and adaptive wavelet denoising using SURE for real-valued signals. 
Our methods also build upon MAD and SURE, but  are suitable for complex-valued signals. In addition, we provide a detailed derivation and a theoretical analysis, and extend the general concept to estimate other quantities that frequently arise in wireless systems.
While some papers apply \emph{real-valued} MAD for noise power estimation in the complex-valued setting (see, e.g., \cite{alexander2000a} for magnetic resonance imaging),  there are non-negligible differences to the complex case. We therefore derive the complex-valued version, provide a theoretical accuracy analysis with a Bernoulli complex Gaussian (BCG) prior, and show application examples that deviate from this prior in order to highlight robustness and usefulness of our results.

\subsection{Contributions}
A variety of applications in communication systems deal with sparse and complex-valued signals whose observations are contaminated with noise.
For such a model, we propose novel low-complexity blind estimators for the average noise power, average signal power, and SNR. 
In addition, we propose a blind estimator for the MSE of an estimation function that aims to recover the sparse signal. We use this blind MSE estimate to design a novel nonparametric channel-vector denoising algorithm. 
We conduct a theoretical analysis of our estimators for a BCG prior, and we showcase simulation results with synthetic data in order to demonstrate the efficacy and limits of our estimators in finite dimensions.
In order to demonstrate the efficacy of our results in situations that deviate from a BCG prior, we provide three application examples of channel-vector denoising in mmWave and cell-free communication systems.
We also show that our low-complexity estimators can be used to accelerate the convergence (and, hence, reduce the complexity) of existing estimators with a concrete example of an EM-based algorithm.

\subsection{Notation}
Lowercase and uppercase boldface letters denote column vectors and matrices, respectively.
The $d$th entry of the vector~$\bma\in\complexset^D$ is~$a_d$; the real and imaginary parts are $\Re\{\bma\}$ and $\Im\{\bma\}$, respectively.
We use $\bmb \define \abssquared{\bma}$ to refer to $b_d=|a_d|^2$ for $d=1,\ldots,D$. For $\bma\in\complexset^D$, the vector 
$q$-norm is defined as $\|\bma\|_q \define \left(\sum_{d=1}^{D}{|a_d|^q}\right)^{1/q}$ for $q\geq1$ with $\|a\|_\infty \define \max_{d=1,\ldots,D} |a_d|$ and the $\ell_0$-pseudo-norm  $\|\bma\|_0$ counts the number of nonzero entries in $\bma$.
The identity matrix is~$\bI$ and the all-zeros vector is $\bZero$.
The discrete Fourier transform matrix is denoted by~$\bF$ and satisfies $\bF^\Herm\bF=\bI$, where the superscript $^\Herm$ denotes the  Hermitian (conjugate transpose) matrix.
An i.i.d. circularly-symmetric complex Gaussian random vector $\bmx\in\complexset^D$ with variance $E_x$ per complex dimension is denoted by $\bmx\sim\setC\setN(\bZero, E_x \bI)$ and its probability density function (PDF) evaluated at $\bmx$ is $f^\CN(\bmx; \bZero,E_x\bI)$.
Sample estimates are denoted by a bar, e.g., the sample variance $\overline{E}_x \define \frac{1}{D}\|\bmx\|_2^2$ of the random vector $\bmx\in\complexset^D$; 
statistical quantities are denoted by plain symbols, e.g., the variance $E_x \define \frac{1}{D}\Ex{}{\|\bmx\|_2^2}$, where $\Ex{}{\cdot}$ denotes expectation; 
blind estimators are denoted by a hat, e.g.,~$\widehat{E}_x$.
For $x\in\reals$, rounding towards plus and minus infinity is denoted by $\lceil x\rceil$ and $\lfloor x \rfloor$, respectively, and $[x]_+ \define \max\{x,0\}$. 
Convergence in probability of a random sequence $A_n$ to a random variable $A$ is $A_n\xrightarrow[\protect{\raisebox{3pt}[0pt][0pt]{\ensuremath{ \scriptstyle{n\to\infty}}}}]{\protect{\raisebox{-0.5pt}[0pt][0pt]{\ensuremath{ \scriptstyle{prob.}}}}} A$ and almost sure convergence is $A_n \xrightarrow[\protect{\raisebox{3pt}[0pt][0pt]{\ensuremath{ \scriptstyle{n\to\infty}}}}]{\protect{\raisebox{-0.5pt}[0pt][0pt]{\ensuremath{ \scriptstyle{a.s.}}}}} A$.
%%%%%%%%%%%%%%%%%%%%%%%%%%%%%%%%%%%%%%%%%%%%%%%%%%%%%%%%%%%%%%%%%%%
%%%%%%%%%%%%%%%%%%%%%%%%%%%%%%%%%%%%%%%%%%%%%%%%%%%%%%%%%%%%%%%%%%%

%%%%%%%%%%%%%%%%%%%%%%%%%%%%%%%%%%%%%%%%%%%%%%%%%%%%%%%%%%%%%%%%%%%
%% ESTIMATORS %%
%%%%%%%%%%%%%%%%%%%%%%%%%%%%%%%%%%%%%%%%%%%%%%%%%%%%%%%%%%%%%%%%%%%
\section{Practical Guide to Low-Complexity Blind Estimators}
\label{sec:nonparametricestimators}

We now introduce two system models and propose low-complexity blind estimators for  the average noise and signal powers, SNR, and MSE. 
The derivation of the proposed estimators and an analysis of the key properties are provided in \fref{sec:theory}.

\subsection{System Models}
\label{sec:systemmodel}
We say that a complex-valued vector 
$\bms\in\complexset^D$ is \emph{sparse} if the number of nonzero entries is smaller than the dimension~$D$. As a sparsity measure, one can use, for example, the $\ell_0$-pseudo-norm~$\|\bms\|_0$. 
This definition of sparsity allows us to derive theoretical results, but in practice, our algorithms also work for \emph{approximately} sparse signals in which most entries are small compared to the noise (but not necessarily zero).
We will focus on the following two system models.

\begin{system} \label{sys:systemmodel1}
Let $\bms\in\complexset^D$ be a sparse signal with average power $\Es \define \frac{1}{D}\Ex{}{\|\bms\|_2^2}$.
We model the input-output relation of a noisy observation of the sparse signal as
\begin{align} \label{eq:inputoutputrelation}
\bmy = \bms + \bmn,
\end{align}
where $\bmy\in\complexset^D$ is the noisy observation and $\bmn\in\complexset^D$ models noise with $\bmn\sim\mathcal{CN}(\bZero,\No\bI)$.
We assume that the sparse signal vector $\bms$ and noise vector $\bmn$ are statistically independent. 
\end{system}
\fref{sys:systemmodel1} finds numerous applications in wireless communication systems. Prime examples are in describing estimated channel vectors (i) in multi-antenna mmWave systems, where the beamspace-domain representation of the channel vectors is typically sparse~\cite{ghods19a,mirfarshbafan2019beamspace,gallyas20a}, (ii) in OFDM systems, where the delay-domain representation of the channel vectors is typically sparse~\cite{upadhya2016risk}, or
(iii) in cell-free communication systems with centralized processing, where the antenna-domain representation of the channel vectors is typically sparse~\cite{gholamipourfard21a}. 
In what follows, we assume the sparse vector $\bms$ is unknown (in contrast to pilot-based estimation), which makes parameter estimation nontrivial in this blind scenario.

\begin{system} \label{sys:systemmodel2}  
Let $\bmy\in\complexset^D$ be a noisy observation as in~\fref{sys:systemmodel1}. 
Fix a weakly differentiable function\footnote{A weakly differentiable function may be nondifferentiable only in zero-measure sets (e.g., for particular values), and has to be differentiable everywhere~else.} $\protect{\festnoargs: \complexset \to \complexset}$ that operates entry-wise on vectors. 
We model the output after applying this function to the noisy observation as
\begin{align} \label{eq:inputoutputrelation2}
\fest{\bmy} = \bms + \bme,
\end{align}
where $\bme\in\mathbb{C}^D$ contains (likely non-Gaussian) residual distortion. We emphasize that the sparse signal vector $\bms$ and the residual distortion vector $\bme$ are not necessarily statistically independent. 
\end{system}

\fref{sys:systemmodel2} is relevant in the following scenarios:
(i) Estimating a sparse signal~$\bms$ from a noisy observation~$\bmy$ by applying an entry-wise denoising or estimation function, producing the signal estimate $\hat\bms \define \fest{\bmy}$; this scenario finds use for channel-vector denoising~\cite{ghods19a,mirfarshbafan2019beamspace,gallyas20a}. 
(ii) Modeling nonlinearities caused by hardware impairments~\cite{jacobsson18d}, in which case the distorted version of the noisy received signal can be expressed as $\bmr \define \fest{\bmy}$; this scenario finds use in signals sampled with low-resolution data converters~\cite{li17b,jacobsson17b}, for example.

\subsection{Low-Complexity Blind Nonparametric Estimators}
\label{sec:estimatorsusbection}

In what follows, we make use of the sample median, which we define as follows. 

\begin{definition}[Sample Median]
Let $\zvariable\in\reals^D$ be a real-valued vector and $\zvariable^\text{sort}\in\reals^D$ be its sorted version (entries sorted in ascending order). Then, the \emph{sample median} is defined as 
\begin{align} \label{eq:samplemedian}
\samplemedian(\zvariable) \define \conditionaltextstyle \frac{1}{2}\Big(\zvariablee^\text{sort}_{\lfloor(D+1)/2\rfloor}+\zvariablee^\text{sort}_{\lceil(D+1)/2\rceil}\Big).
\end{align}
\end{definition}
The sample median is robust to outliers \cite{rousseeuw1993alternatives,huber2004robust}, which makes it amenable to \fref{sys:systemmodel1}, as the nonzero entries of the sparse vector $\bms$ can be considered to be outliers for the purpose of separating the sparse signal from noise. 
We emphasize that the sample median can be computed at a complexity of $\setO(D)$ average time using quickselect~\cite{tibshirani09a} or of $\setO(D)$ deterministic time using the MedianOfNinthers algorithm~\cite{alexandrescu17a}.

We now propose a range of low-complexity blind estimators (no pilots required) for complex-valued signals that require no parameters. 

\begin{estimator}[Average Noise Power] \label{est:noisevariance}
Consider~\fref{sys:systemmodel1}. 
We propose the following blind estimator 
\begin{align} \label{eq:noiseestimator}
\estimatedNo \define \conditionaltextstyle  \frac{\samplemedian(\abssquared{\bmy})}{\log(2)}
\end{align}
{for} the average noise power defined as 
$\No \define \conditionaltextstyle \Ex{}{\|\bmn\|_2^2}/D$.
\end{estimator}

\fref{est:noisevariance} is blind as it only requires  the absolute square entries of the noisy observation~$\bmy$ in~\fref{eq:inputoutputrelation}.
The estimate $\estimatedNo$ can be computed efficiently in $\setO(D)$ time, since the most complex operation is computing the median of a vector of dimension~$D$. 
\fref{est:noisevariance} exploits sparsity in the signal $\bms$, but is independent of the signal sparsity, the signal power, or the statistical sparsity model.
It is, however, important to understand that the accuracy of this estimator depends on all of these factors as it relies on the fact that the nonzero entries of the sparse vector~$\bms$ can be treated as outliers for the purpose of estimating the average noise power.
We note that this noise power estimator can be seen as a complex-valued and squared version\footnote{The squared median absolute deviation (MAD) estimator for real-valued signals provided in~\cite{huber2004robust} corresponds to $\samplemedian(|\bmy|)^2$ whereas we propose to use $\samplemedian(|\bmy|^2)$. While $\samplemedian(|\bmy|)^2 \leq \samplemedian(|\bmy|^2)$ if $D$ is even, both estimators coincide if $D$ is odd. What is more, our scaling factor $\log(2)\approx 0.6931$ differs considerably from the widely-used scaling factor of $(\Phi^{-1}(3/4))^2\approx (0.6745)^2$ for real-valued signals~\cite{donoho94}. We reiterate that the latter is derived for power estimation of \emph{real-valued} Gaussians using the MAD estimator, while in our derivation we consider the case of \emph{complex-valued} Gaussians.} of the median absolute deviation (MAD) estimator~\cite{rousseeuw1993alternatives,pham-gia_01a}, where we use the assumption that the noise in \fref{sys:systemmodel1} is zero mean. 
The intuition behind this estimator (and the $\log(2)$ factor) is the fact that the entries ${|n_d|^2}/{(\No/2)}$, $d=1,\ldots,D$ are  $\chi^2$ distributed with two degrees of freedom, which have a median of $2\log(2)$, and that the median of $\abssquared{\bmy}$ is not significantly ``contaminated'' by the sparse signal.
{\fref{est:noisevariance} is used in the estimators proposed next.}

\begin{estimator}[Average Signal Power] \label{est:signalpower}
Consider~\fref{sys:systemmodel1}. 
We propose the following blind estimator 
\begin{align} \label{eq:signalpowerestimator}
\estimatedEs \define \conditionaltextstyle  \left[ \frac{\|\bmy\|_2^2}{D} - \estimatedNo \right]_{\!+}
\end{align}
for the average signal power defined as
$\Es \define \conditionaltextstyle \Ex{}{\|\bms\|_2^2}/D$.
\end{estimator}

\fref{est:signalpower} is blind as it only requires the sample estimate of the receive power $\overline{E}_y \define \|\bmy\|_2^2/D$ and the blind noise estimate~$\estimatedNo$ from \fref{est:noisevariance}.
$\estimatedEs$ can be computed efficiently in $\setO(D)$ time, since the most complex operation is computing~$\estimatedNo$. 
The intuition behind this estimator comes from subtracting the estimated noise power from the total receive power, as done previously in~\cite{wang10b} for an OFDM-specific estimator.

\begin{estimator}[Signal-to-Noise Ratio]  \label{est:SNR}
Consider~\fref{sys:systemmodel1}. 
We propose the following blind estimator 
\begin{align} \label{eq:SNRestimateomg}
\estimatedSNR \define \conditionaltextstyle \left[ \frac{\|\bmy\|_2^2}{D \estimatedNo} - 1 \right]_{\!+}
\end{align}
for the SNR defined as $\SNR \define \conditionaltextstyle {\Ex{}{\|\bms\|_2^2}}/{\Ex{}{\|\bmn\|_2^2}}$.
\end{estimator}

\fref{est:SNR} is blind as it only requires the sample estimate of the receive power $\overline{E}_y \define \|\bmy\|_2^2/D$ and the blind estimate~$\estimatedNo$ from \fref{est:noisevariance}.
$\estimatedSNR$ can also be computed efficiently in $\setO(D)$ time.
The intuition behind this estimator comes from dividing the estimated signal and noise powers, as done previously in~\cite{wang10b} for an OFDM-specific estimator.

\begin{estimator}[Mean-Square Error]   \label{est:MSE}
Consider~\fref{sys:systemmodel2} with a fixed function $\festnoargs: \complexset \to \complexset$.
We propose the following blind estimator  
\begin{align} 
\estimatedMSE \define\, & \conditionaltextstyle \frac{1}{D}\|\fest{\bmy}-\bmy\|_2^2 - \estimatedNo \notag \\
 &  + \conditionaltextstyle \frac{\estimatedNo}{D}\sum_{d=1}^D\left(\frac{\partial\Re\{\fest{y_d}\}}{\partial\Re\{y_d\}}+\frac{\partial\Im\{\fest{y_d}\}}{\partial\Im\{y_d\}}\right) \label{eq:MSEnonparametricexplicitform}
\end{align}
for the MSE defined as $\MSE \define \Ex{}{\|\fest{\bmy}-\bms\|_2^2}/D = \Ex{}{\|\bme\|_2^2}/D$.
\end{estimator}

\fref{est:MSE} is blind as it only requires the receive signal $\bmy$, the blind estimate~$\estimatedNo$ from \fref{est:noisevariance}, and the function $\festnoargs$. 
The complexity of the proposed MSE estimator depends on the function $\festnoargs$. For example, \fref{eq:MSEnonparametricexplicitform} can be computed efficiently in $\setO(D)$ time  for the soft-thresholding function with a given threshold. Even if the threshold is not given, searching for the best threshold and applying the soft-thresholding function can be done in $\setO(D\log(D))$ time using the methods developed for the BEACHES algorithm in \cite{ghods19a}.
The MSE is a frequently used metric to evaluate the performance of estimation algorithms. 
Our blind MSE estimate, since it is independent of~$\bms$, can be used to automatically tune parameters in estimators.
The intuition behind this estimator relies on SURE, and we refer the interested reader to \cite{tibshirani2015stein} for an accessible derivation in the real-valued case and to \cite{ghods19a,mirfarshbafan2019beamspace} for a derivation in the complex-valued case.
\fref{est:MSE} is used to obtain the nonparametric channel-vector denoising algorithm described in \fref{sec:channel_denoising}.

\subsection{Low-Complexity Blind Parametric Estimators}

We now propose a low-complexity blind estimator (no pilots or training sequences required) that takes an estimate~$\estimatedp$ of the activity rate as  a parameter. We then propose a family of parametric estimators for the activity rate.
	
\begin{estimator}[Average Noise Power with Estimated SNR and Activity Rate Corrections] \label{est:sandwich}
Consider \fref{sys:systemmodel1}, the low-complexity blind estimates $\estimatedSNR$ from \fref{est:SNR}, and a parameter~$\estimatedp$ that is an estimate of the activity rate $p$.
We propose the following blind parametric estimator
\begin{align} 
\estimatedNosandwich \define\, \conditionaltextstyle \frac{1}{2}\estimatedNo & \left({ \max\left\{ \frac{\log(2)}{\log\left(\frac{2-2\estimatedp}{1-2\estimatedp}\right)}, \frac{1}{1+\estimatedSNR} \right\} }\right. \notag \\
& \quad \left. \vphantom{\left\{ \frac{lala}{\left(\frac{lala\estimatedp}{lala\estimatedp}\right)}\right\} } +  (1-\estimatedp)+\frac{\estimatedp^2}{\estimatedp+\estimatedSNR}  \right) \label{eq:sandwich_estimator}
\end{align}
{for} the average noise power $\No$. 
\end{estimator}
\fref{est:sandwich} is blind as it only requires the blind estimates $\estimatedNo$, $\estimatedSNR$, but is parametric as it depends on the activity rate estimate $\estimatedp$.
$\estimatedNosandwich$ can be computed efficiently in $\setO(D)$ time, since the most complex operations are computing $\estimatedNo$ and $\estimatedSNR$, and eventually $\estimatedp$ (but here we consider~$\estimatedp$ as a given parameter and ignore the complexity associated with obtaining~it).
The intuition behind this estimator will become clear after we present \fref{thm:mainresult}, as it is derived from averaging a lower and an upper bound on $\No$.
As shown in \fref{sec:synthetic_results}, this parametric estimate $\estimatedNosandwich$ often yields better accuracy than the nonparametric estimate~$\estimatedNo$. 

Since in some applications an estimate for $\estimatedp$ may be unavailable, we next propose a family of estimators that attempt to extract the activity rate~$p$ directly from the noisy observation vector~$\bmy$. Such estimators can, for example, be used to substitute $\estimatedp$ in \fref{eq:sandwich_estimator}.

\begin{estimator}[Activity Rate] \label{est:p}
Consider \fref{sys:systemmodel1} and integers $1\leq q < r$. We propose the following family of blind parametric estimators\footnote{In practice, we use $\min\{0.499,\estimatedp(q,r)\}$ in place of $\estimatedp(q,r)$, so that $\log\!\Big(\frac{2-2\estimatedp(q,r)}{1-2\estimatedp(q,r)}\Big)$ is always well-defined, as required~by~\fref{eq:probabilitycondition}.}
\begin{align} \label{eq:est_p}
\estimatedp(q,r) \define \conditionaltextstyle \frac{1}{D}\left(\frac{\|\bmy\|_q}{\|\bmy\|_r}\right)^{\frac{1}{1/q-1/r}}
\end{align}
for the activity rate\footnote{We define the activity rate as the fraction of nonzero entries of the vector~$\bms$. Values of $p$ close to $0$ indicate the vector is sparse and $p=1$ indicates that all entries are nonzero.} defined as $p \define \Ex{}{\|\bms\|_0}/D$.
\end{estimator}

\fref{est:p} is blind as it only requires the receive vector $\bmy$, but is parametric as it requires a choice for $q$ and $r$. $\estimatedp(q,r)$ can be computed efficiently in $\setO(D)$ time and, among others, the following choices for $q$ and $r$ require low complexity: 
$\estimatedp{(1,2)} \define \frac{1}{D}\left(\frac{\|\bmy\|_1}{\|\bmy\|_2}\right)^{2}$, 
$\estimatedp{(1,\infty)} \define \frac{1}{D}\left(\frac{\|\bmy\|_1}{\|\bmy\|_\infty}\right)$, 
$\estimatedp{(2,4)} \define \frac{1}{D}\left(\frac{\|\bmy\|_2}{\|\bmy\|_4}\right)^{4}$, and 
$\estimatedp{(2,\infty)} \define \frac{1}{D}\left(\frac{\|\bmy\|_2}{\|\bmy\|_\infty}\right)^{2}$.
The parameters $q$ and $r$ must be chosen according to simulations, as we are unaware of a principled and reliable way to determine them. In our simulations, the choice~$\estimatedp(1,\infty)$ performed~best.

\subsection{Blind Parametric Estimator Based on Expectation-Maximization (EM)}
As a baseline, we {also} consider a blind EM estimator (no pilots required) that requires initialization values and algorithm parameters {that} determine the convergence criterion.  

\begin{estimator}[Noise Power, Signal Power, and Activity Rate]   \label{est:EM} 
Consider~\fref{sys:systemmodel1}. \fref{alg:EM}, initialized with $\protect{\Noinit<\|\bmy\|_2^2/D}$ and $\pinit<0.5$, simultaneously estimates the noise power $\No$, the signal power $\Es$, and the activity rate $p$.
\end{estimator}

\begin{algorithm*}
\caption{\strut Baseline EM (adapted from~\cite[Alg.\,8.1]{hastie01a} to circularly-symmetric complex Gaussians) 
\label{alg:EM}}
\begin{algorithmic}[1]
\STATE {\bf input} $\bmy$, $\nitmax$, $\tol$, $\Noinit$, and $\pinit$ \COMMENT{noisy signal, maximum iterations, tolerance, initial noise variance, initial activity rate}
\STATE {\bf initialize} \COMMENT{initialize weights ($w$) and variances ($v$) of distributions A and B as follows:} 
\STATE \quad $w_a \leftarrow 1-\pinit$, $v_a \leftarrow \Noinit$ \COMMENT{initialize A as noise}
\STATE \quad $w_b \leftarrow \pinit$, $v_b \leftarrow v_a+\frac{1}{\pinit}\left[\frac{\|y\|_2^2}{D}-\Noinit\right]_+$ \COMMENT{initialize B as signal plus noise, ensuring $w_a v_a + w_b v_b = {\|\bmy\|_2^2/D}$}
\STATE \quad $k \leftarrow 0$, $\vaold \leftarrow \infty$, $\vbold \leftarrow \infty$
\WHILE[iterate until convergence or maximum iterations reached\!\!\!]{$\tol < \frac{|v_a-\vaold|}{v_a} + \frac{|v_b-\vbold|}{v_b}$ \AND $\nitidx < \nitmax$} 
\STATE $\nitidx \leftarrow \nitidx+1$, $\vaold \leftarrow v_a$, $\vbold \leftarrow v_b$
\FOR{$d=1$ \TO $D$} 
\STATE $a_d \leftarrow \frac{w_a f^\CN(y_d;0,v_a)}{w_a f^\CN(y_d;0,v_a)+w_b f^\CN(y_d;0,v_b)}$ \COMMENT{likelihood that $y_d$ comes from distribution A}
\STATE $b_d \leftarrow \frac{w_b f^\CN(y_d;0,v_b)}{w_a f^\CN(y_d;0,v_a)+w_b f^\CN(y_d;0,v_b)}$ \COMMENT{likelihood that $y_d$ comes from distribution B}
\ENDFOR
\STATE $w_a \leftarrow \frac{1}{D}\sum_{d=1}^{D} a_d$, $w_b \leftarrow 1-w_a = \frac{1}{D}\sum_{d=1}^{D} b_d$ \COMMENT{update weights based on updated likelihoods} 
\STATE $v_a \leftarrow \frac{1}{w_a D}{\sum_{d=1}^{D}{a_d |y_d|^2}}$, $v_b \leftarrow \frac{1}{w_b D}{\sum_{d=1}^{D}{b_d |y_d|^2}}$ \COMMENT{update variances based on updated likelihoods}
\ENDWHILE																
\IF[assign the smallest variance to noise, and the largest variance to signal plus noise\!\!\!]{$v_a>v_b$}
\STATE $\estimatedNoEM \leftarrow v_b$, $\estimatedEsEM \leftarrow w_{a} (v_a-v_b)$,  $\estimatedpEM \leftarrow w_a$
\ELSE
\STATE $\estimatedNoEM \leftarrow v_a$, $\estimatedEsEM \leftarrow w_{b} (v_b-v_a)$,  $\estimatedpEM \leftarrow w_b$
\ENDIF
\STATE $\nit \leftarrow \nitidx$ \COMMENT{save number of iterations until convergence or stopping condition met}
\STATE  {\bf return} $\estimatedNoEM$, $\estimatedEsEM$, and $\estimatedpEM$ \COMMENT{estimated noise variance, estimated signal variance, estimated activity rate}
\end{algorithmic}
\end{algorithm*}

\fref{est:EM} is blind as it only requires the noisy observation~$\bmy$, but is parametric as it needs a choice for the maximum number of iterations $\nitmax$, the tolerance $\tol$, and initialization values for the noise power $\Noinit$ and activity rate $\pinit$. The total number of EM iterations $\nit$ is not fixed but depends on $\nitmax$, $\tol$, $\Noinit$, $\pinit$, and on the input $\bmy$ itself. 
The complexity of \fref{est:EM} is $\setO(KD)$. 
We note that this estimator is a variant of a classical EM algorithm for a two-component Gaussian mixture~\cite{hastie01a}, where we use the assumption that the signal and the noise in~\fref{sys:systemmodel1} are zero mean and complex valued. 
The intuition behind this estimator is the fact that each entry $|y_d|$, $d=1,\ldots,D$, of vector $\bmy$  contains either noise or signal-plus-noise, and those two cases have Gaussian distribution with different variances.

We note that this baseline EM algorithm is only a minor variation of the method in~\cite[Alg.~8.1]{hastie01a}. The iterative nature of such methods, however, results in (often significantly) higher complexity than our estimators.
With this in mind, we propose an improved version that we call ``accelerated EM,'' which simply consists of initializing the baseline EM algorithm using our blind nonparametric noise variance estimator. As we will see in \fref{sec:accelerated_convergence}, this accelerated EM variant drastically reduces the number of iterations needed for convergence without degrading accuracy.

{\subsection{Summary of Proposed Power Estimation and Denoising Algorithms}}

\begin{table*}
\caption{Complexity and Accuracy Summary. $D$ is the signal dimension and $K^\text{ACC} \ll K^\text{BL}$ refer to the number of iterations in the accelerated EM and baseline EM algorithms, respectively.
}
\label{tbl:comparison}
%\vspace{-0.1cm}
\centering
\begin{tabular}{@{}lcccc@{}}
\toprule
\multicolumn{1}{c}{\textbf{}} & \multicolumn{2}{c}{{Complexity}} & \multicolumn{2}{c}{{Accuracy}}\\
\multicolumn{1}{c}{\textbf{}} & \multicolumn{1}{c}{Power estimation} & \multicolumn{1}{c}{Denoising} & \multicolumn{1}{c}{Synthetic data} & \multicolumn{1}{c@{}}{Realistic channels}\\
\midrule
{Baseline EM} & $\setO(K^\text{BL}D)$ & $\setO(K^\text{BL}D+D\log(D))$ & (\checkmark\checkmark\checkmark)  & (\checkmark\checkmark)\\
{Accelerated EM} & $\setO(K^\text{ACC}D)$ & $\setO(K^\text{ACC}D+D\log(D))$ & (\checkmark\checkmark\checkmark)  & (\checkmark\checkmark)\\
{Nonparametric} &$\setO(D)$ & $\setO(D\log(D))$ & (\checkmark)  & (\checkmark\checkmark\checkmark)\\
{Parametric} &$\setO(D)$ & $\setO(D\log(D))$ & (\checkmark\checkmark)  & (\checkmark\checkmark\checkmark)\\
\bottomrule
\end{tabular}
\end{table*}

\fref{tbl:comparison} summarizes the complexity and accuracy of the different estimators. ``Baseline EM'' refers to \fref{est:EM}, ``accelerated EM'' to \fref{est:EM} initialized using \fref{est:noisevariance}, ``nonparametric'' to \fref{est:noisevariance}, and ``parametric'' to \fref{est:sandwich}. 
The complexity for blind noise power estimation is mentioned below the definition of each of these algorithms. The complexity for denoising is the complexity of estimating the noise power plus the complexity of the BEACHES algorithm from \cite{mirfarshbafan2019beamspace}. Since BEACHES already sorts the magnitudes of the noisy signal, the nonparametric and parametric estimators that use the median require no additional complexity for estimating the noise power.
Anticipating the results shown in \fref{sec:synthetic_results} and \fref{sec:channel_denoising}, we illustrate (qualitatively) the accuracy of the estimators with synthetic data that perfectly matches the BCG prior, and with practical examples that deviate from this prior. 

%%%%%%%%%%%%%%%%%%%%%%%%%%%%%%%%%%%%%%%%%%%%%%%%%%%%%%%%%%%%%%%%%%%
%%%%%%%%%%%%%%%%%%%%%%%%%%%%%%%%%%%%%%%%%%%%%%%%%%%%%%%%%%%%%%%%%%%

%%%%%%%%%%%%%%%%%%%%%%%%%%%%%%%%%%%%%%%%%%%%%%%%%%%%%%%%%%%%%%%%%%%
%% THEORY %%
%%%%%%%%%%%%%%%%%%%%%%%%%%%%%%%%%%%%%%%%%%%%%%%%%%%%%%%%%%%%%%%%%%%

\section{Theory}
\label{sec:theory}

We first show that the sample median approaches the median for $D\to\infty$ and introduce our statistical model for sparse vectors. We then derive and analyze Estimators 1 to 7. 
The observations made in this section are valid in the large-dimension limit and for the noisy BCG model to be introduced in \fref{def:noisyBCG}. 
We use simulations to demonstrate the accuracy of our estimators for finite (and small) dimensions~$D$ with the noisy BCG model
in \fref{sec:synthetic_results}.
To demonstrate the efficacy of our methods in practical scenarios with signals that deviate from the BCG model, we evaluate our denoising algorithms in three distinct scenarios in \fref{sec:channel_denoising}.

\subsection{Convergence of the Sample Median for $D\to\infty$}
We will use the following definition of the median.
\begin{definition}[Median] \label{def:median}
Let $X$ be an absolutely continuous random variable (RV) with cumulative distribution function (CDF) $F_X(x)$.
Then, the \emph{median} $\median_X$ of $X$ is defined as
\begin{align}
F_X(\median_X) =  \conditionaltextstyle \frac{1}{2}. \label{eq:median}
\end{align}
\end{definition}

While, analogously to the central limit theorem, the sample median is approximately Gaussian if $D$ is large (see, e.g.,~\cite{MM08}), we will only use the following result. 

\begin{lemma}[Lem.~C.1 from \cite{MM08}] \label{lem:convergenceofmedian}
Let $X$ be a RV whose PDF is differentiable in some neighborhood of the median $\median_X$ and vector $\bmx$ contain i.i.d.\ samples of $X$.
Then, for any $c>0$ the sample median $\samplemedian(\bmx)$ satisfies  
\begin{align}
\lim_{D\to \infty} \Pr [ |\samplemedian(\bmx)-\median_X|\geq c ] = 0.
\end{align}
\end{lemma}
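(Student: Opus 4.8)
The plan is to prove the stronger claim that $\samplemedian(\bmx)$ converges to $\median_X$ in probability, by splitting the deviation event into its two one-sided pieces and reducing each to a count of how many of the $D$ i.i.d.\ samples fall on one side of a fixed threshold. Fix $c>0$. Because $X$ is absolutely continuous with a density that is positive throughout a neighborhood of $\median_X$ (this positivity is exactly what makes the median in \fref{def:median} a well-defined, \emph{strict} crossing of the level $\tfrac12$), the CDF increases strictly across $\median_X$, so that $\delta_+ \define F_X(\median_X+c)-\tfrac12>0$ and $\delta_- \define \tfrac12-F_X(\median_X-c)>0$. It then suffices to show $\Pr[\samplemedian(\bmx)\geq \median_X+c]\to 0$ and $\Pr[\samplemedian(\bmx)\leq \median_X-c]\to 0$ and to finish with a union bound.

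For the upper tail, let $N_+\define\#\{d: x_d\geq \median_X+c\}$, which is a sum of i.i.d.\ indicators $\mathbf{1}[x_d\geq \median_X+c]$ with mean $1-F_X(\median_X+c)=\tfrac12-\delta_+$. From the definition of the sample median in \fref{eq:samplemedian}, the event $\samplemedian(\bmx)\geq \median_X+c$ forces the larger of the two central order statistics to satisfy $x^\text{sort}_{\lceil(D+1)/2\rceil}\geq \median_X+c$ (an average of two numbers can reach a threshold only if the larger one does), which in turn forces $N_+\geq D-\lceil(D+1)/2\rceil+1\geq \lfloor D/2\rfloor$. Hence $\Pr[\samplemedian(\bmx)\geq\median_X+c]\leq \Pr[N_+/D\geq \lfloor D/2\rfloor/D]$. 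Since $\lfloor D/2\rfloor/D\to\tfrac12$, for all large $D$ we have $\lfloor D/2\rfloor/D>\tfrac12-\delta_+/2$, and therefore the weak law of large numbers gives
\[
\Pr\!\left[\frac{N_+}{D}\geq \frac{\lfloor D/2\rfloor}{D}\right] \leq \Pr\!\left[\left|\frac{N_+}{D}-\Big(\tfrac12-\delta_+\Big)\right|\geq \frac{\delta_+}{2}\right]\xrightarrow{D\to\infty}0.
\]

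The lower tail is handled symmetrically: with $N_-\define\#\{d: x_d\leq \median_X-c\}$, whose normalized count converges in probability to $\tfrac12-\delta_-<\tfrac12$, the event $\samplemedian(\bmx)\leq \median_X-c$ forces $x^\text{sort}_{\lfloor(D+1)/2\rfloor}\leq \median_X-c$ and hence $N_-\geq \lfloor D/2\rfloor$, giving $\Pr[\samplemedian(\bmx)\leq\median_X-c]\to 0$ by the same argument. A union bound over the two tails then yields $\Pr[|\samplemedian(\bmx)-\median_X|\geq c]\to 0$, as claimed. The only genuinely delicate point is establishing the strict inequalities $\delta_\pm>0$, i.e.\ that $F_X$ is strictly increasing across the median; this is where the hypothesis on the density near $\median_X$ is used, and for the BCG application the relevant variable is $\chi^2$ with two degrees of freedom, whose density is strictly positive on $(0,\infty)$, so the crossing is automatically strict. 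Everything else is the law of large numbers together with the elementary order-statistic bound, so I expect no difficulty beyond carefully bookkeeping the floor/ceiling indices for even versus odd $D$.
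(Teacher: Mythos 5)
Your proof is correct, but it necessarily takes a different route from the paper, because the paper does not prove this lemma at all: it imports it verbatim as Lemma~C.1 of \cite{MM08}, alongside the remark that the sample median is in fact asymptotically Gaussian. Your argument---splitting the deviation into its two one-sided events, reducing each to a binomial count of samples beyond a fixed threshold via the order-statistic observation that an average of two numbers can exceed (fall below) a level only if the larger (smaller) of them does, and finishing with the weak law of large numbers and a union bound---is the standard elementary consistency proof, and your floor/ceiling bookkeeping for the two central order statistics in \fref{eq:samplemedian} checks out for both even and odd $D$. What your self-contained derivation buys is transparency about the hypotheses: it makes explicit that differentiability of the PDF is never actually used, and that all one needs is the strict crossing $F_X(\median_X+c)>\tfrac12>F_X(\median_X-c)$ for every $c>0$. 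That is also the one place where you should be more careful: differentiability of the density near $\median_X$ does not by itself imply the positivity you assert in your parenthetical (a density could vanish on an interval at level $\tfrac12$, in which case the conclusion is false because the sample median fails to concentrate at a point). What you are really invoking is the implicit assumption that $\median_X$ is the \emph{unique} solution of $F_X(\median_X)=\tfrac12$ in \fref{def:median}, which is equivalent to $\delta_+\define F_X(\median_X+c)-\tfrac12>0$ and $\delta_-\define\tfrac12-F_X(\median_X-c)>0$ for all $c>0$; this holds in the paper's application, where the noisy BCG power density is a mixture of exponentials, strictly positive on $(0,\infty)$. The citation route, by contrast, is shorter and comes bundled with the stronger distributional statement (asymptotic normality of the sample median) that the paper mentions but does not need.
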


This result implies that in the \emph{large-dimension limit} ($\protect{D\to\infty}$), the sample median~$\samplemedian(\bmx)$ converges in probability to the median $\median_X$.
Hence, by observing a sufficiently large number of samples, which is possible in modern multi-antenna mmWave or OFDM systems, we can accurately estimate the median~$\median_X$. 

\subsection{Statistical Model for Complex-Valued Sparse Vectors}
\label{sec:statisticalmodel}
To derive and analyze the blind estimators proposed in \fref{sec:nonparametricestimators}, we need a statistical model for the sparse signal $\bms$.
This model should (i) have as few parameters as possible while being able to model a large class of complex-valued sparse vectors typically arising in communication systems and (ii) facilitate a theoretical analysis.
In what follows, we consider BCG random vectors \cite{vila11a,rangan11a}, which allow control over the signal sparsity and the signal power.
We reiterate that the BCG model is instrumental \emph{only} for our analysis. The provided simulation results in \fref{sec:channel_denoising} will show that the proposed estimators exhibit robustness to model mismatch, e.g., for signals that are not necessarily i.i.d. Gaussian or circularly~symmetric.

\begin{definition}[BCG Random Vector] \label{def:BCG}
A sparse vector $\bms\in\complexset^D$ is BCG if each entry is nonzero with probability $p\in(0,1]$, and the nonzero entries are i.i.d.\ circularly-symmetric complex Gaussian with variance $\Eh$.
The PDF of each entry $s_d$, $d=1,\ldots,D$, is therefore given by
\begin{align} \label{eq:BCG}
 f_S(s_d) \define \conditionaltextstyle (1-p) \delta(s_d) + p \frac{1}{\pi \Eh} e^{-\frac{|s_d|^2}{\Eh}},
\end{align}
where $\delta(\cdot)$ is the Dirac delta distribution. 
\end{definition}

With this model, the activity rate is $p = \Ex{}{\|\bms\|_0}/D$ (meaning the expected number of nonzero entries is $\protect{\Ex{}{\|\bms\|_0} = pD}$), and the average power of the sparse signal  vector $\bms$ is $\Es = \frac{1}{D}\Ex{}{\|\bms\|_2^2}$.

In \fref{sys:systemmodel1}, we assumed that the noise vector $\bmn$ is i.i.d.\ circularly-symmetric complex Gaussian with variance~$\No$ per complex entry. Hence, the PDF of each entry $n_d$, $d = 1,\ldots,D$, is given by $f^\CN(n_d;0,\No) \define \frac{1}{\pi\No}e^{-|n_d|^2/\No}$.
Consequently, 
if $\bms$ is a BCG random vector,
then the PDF of the noisy observation vector~$\bmy=\bms+\bmn$ is as follows.

\begin{definition}[Noisy BCG Random Vector] \label{def:noisyBCG}
The PDF of the entries $y_d$, $d=1,\ldots,D$, of a BCG random vector per \fref{def:BCG} observed as  in~\fref{sys:systemmodel1}  is given by
\begin{align}
\conditionaltextstyle f_Y(y_d) \define\ & (1-p) \frac{1}{\pi \No} e^{-\frac{|y_d|^2}{\No}}   \notag \\
& + \conditionaltextstyle p \frac{1}{\pi (\No+\Eh)} e^{-\frac{|y_d|^2}{\No+\Eh}}.
\end{align}
\end{definition}

For this signal and observation model, we are now able to derive and analyze Estimators 1~to~7.  
We will make frequent use of the entry-wise square of vector $\bmy$ that we will call $\bmz \define |\bmy|^2$.
We also define a random variable (RV) $Z$ with the same distribution as any of the i.i.d entries of $\bmz$, and let $\median_Z$ be the median of $Z$.

\subsection{Analysis of \fref{est:noisevariance}}
\label{sec:noise_estimator_analysis}

We start with the blind noise power estimator defined in \fref{est:noisevariance}. 
We have the following key result. The proof is given in \fref{app:mainresult}. 

\vspace*{-0.28cm}

\begin{theorem} \label{thm:mainresult}
Let $\bmy$ be a noisy BCG random vector with PDF as in \fref{def:noisyBCG} and with activity rate satisfying
\begin{align} 
p \leq\conditionaltextstyle \pmax \quad  \text{with} \quad  \pmax \define \frac{e^2 - 2}{2e^2 - 2}  \approx 0.421. \label{eq:probabilitycondition}
\end{align}
Let a lower bound $\LB$ and an upper bound $\UB$ be defined as follows:
\begin{align} 
\LB & \define  \conditionaltextstyle \frac{\median_Z}{ \min\left\{ \log\left(\frac{2-2p}{1-2p}\right), \log(2)(1+\SNR) \right\} } \label{eq:LB} \\
\UB & \define  \conditionaltextstyle \frac{\median_Z}{\log(2)} \! \left( \!(1-p)+\frac{p^2}{p+\SNR}  \right)\!. \label{eq:UB}
\end{align}
Then, the average noise power $\No$ satisfies\footnote{Here we simplify the notation: $\estimatedNolargeD$ converges \emph{in probability} to $\median_{Z}/\log(2)$, and strictly speaking this latter expression is the upper bound.}
\begin{align}  \label{eq:yummysandwichbound}
\LB  \leq \No \leq \UB \leq \estimatedNolargeD.
\end{align}
\end{theorem}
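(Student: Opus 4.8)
The plan is to collapse the entire chain into two-sided bounds on the single scalar $a \define \median_Z/\No$. First I would identify the law of $Z=\abssquared{y_d}$: under \fref{def:noisyBCG} each mixture component is the squared magnitude of a circularly-symmetric complex Gaussian, so $Z$ is a mixture of two exponentials with means $\No$ and $\No+\Eh$, giving the CDF $F_Z(t)=1-(1-p)e^{-t/\No}-p\,e^{-t/(\No+\Eh)}$. Imposing $F_Z(\median_Z)=\frac{1}{2}$ and writing $\beta\define\No/(\No+\Eh)=p/(p+\SNR)$ yields the single defining relation
\begin{align}
g(a) \define (1-p)e^{-a} + p\,e^{-\beta a} = \frac{1}{2}, \label{eq:planmedian}
\end{align}
and since $g$ is strictly decreasing, $a$ is its unique root. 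Setting $C\define(1-p)+p\beta=(1-p)+\frac{p^2}{p+\SNR}$ and recalling that $\estimatedNolargeD=\median_Z/\log(2)$ (by \fref{lem:convergenceofmedian} applied to $\bmz=\abssquared{\bmy}$), the four inequalities of \fref{eq:yummysandwichbound} translate into: $\No\le\UB\Leftrightarrow a\ge\log(2)/C$; $\UB\le\estimatedNolargeD\Leftrightarrow C\le1$; and $\LB\le\No\Leftrightarrow a\le\min\{\log(\frac{2-2p}{1-2p}),\,\log(2)(1+\SNR)\}$.

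Two of these are immediate. The rightmost inequality $\UB\le\estimatedNolargeD$ is exactly $C\le1$, which holds because $\frac{p^2}{p+\SNR}\le p$ for $\SNR\ge0$. For $\No\le\UB$ I would apply Jensen's inequality to \eqref{eq:planmedian}: viewing $g(a)=\Ex{}{e^{-X}}$ for the two-point variable $X$ taking values $a$ and $\beta a$ with probabilities $1-p$ and $p$, so that $\Ex{}{X}=Ca$, convexity of $e^{-x}$ gives $\frac{1}{2}=g(a)\ge e^{-Ca}$, hence $Ca\ge\log(2)$ and $a\ge\log(2)/C$, which is precisely $\No\le\UB$.

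For the lower bound $\LB\le\No$ I must show that $a$ exceeds neither competing quantity. The bound $a\le\log(\frac{2-2p}{1-2p})$ follows by discarding the signal term: since $e^{-\beta a}\le1$, \eqref{eq:planmedian} gives $(1-p)e^{-a}\ge\frac{1}{2}-p$, i.e.\ $e^{-a}\ge\frac{1-2p}{2-2p}$ (this is where $p<\frac{1}{2}$ is needed), and taking logarithms closes it. The remaining bound $a\le\log(2)(1+\SNR)$ is, by monotonicity of $g$, equivalent to the single scalar inequality $g(\log(2)(1+\SNR))\le\frac{1}{2}$; geometrically this says that the mixture median $\median_Z$ never exceeds $\log(2)(\No+\Es)$, the median of a \emph{single} exponential with the same mean $\No+\Es$ as $Z$.

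This last step is where I expect the genuine work, and where the hypothesis $p\le\pmax$ enters. The natural route is a case split on which term of the $\min\{\cdot,\cdot\}$ is active. When $\log(2)(1+\SNR)\ge\log(\frac{2-2p}{1-2p})$ the already-proved bound $a\le\log(\frac{2-2p}{1-2p})$ suffices; in the complementary regime the binding exponent is $\log(2)(1+\SNR)\le\log(\frac{2-2p}{1-2p})$, and the condition $p\le\pmax$---which is exactly equivalent to $\log(\frac{2-2p}{1-2p})\le2$, since the ratio $\frac{2-2p}{1-2p}$ increases in $p$ and equals $e^2$ at $\pmax$---keeps this exponent bounded by $2$. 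With the argument confined to $[0,2]$, elementary convexity estimates for $e^{-x}$ (such as $e^{-x}\le 1-x+\frac{x^2}{2}$) applied termwise to $g(\log(2)(1+\SNR))$ become sharp enough to force $g(\log(2)(1+\SNR))\le\frac{1}{2}$. The main obstacle is thus making this single inequality hold \emph{uniformly} in $\SNR\ge0$: the two components of $g$ pull in opposite directions as $\SNR$ grows, the inequality is tight (an equality) at $\SNR=0$, and controlling the behavior just above $\SNR=0$ is the delicate point that the restriction $p\le\pmax$ is designed to handle.
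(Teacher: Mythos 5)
Your reduction to the scalar $a = \median_Z/\No$ is sound, and three of the four inequalities are correctly dispatched by essentially the paper's own arguments: dropping the signal term is the paper's \fref{lem:smollemma1}, your convexity-of-$e^{-x}$ Jensen step for $\No \le \UB$ is exactly the paper's \fref{lem:smollemma3} (phrased there as concavity of the exponential CDF), and $C \le 1$ is the paper's final combining step. The gap is the remaining inequality $a \le \log(2)(1+\SNR)$, which is the heart of the theorem and the only place where $p \le \pmax$ genuinely matters (the paper's \fref{lem:smollemma2}). Your proposed technique for it---termwise majorization such as $e^{-x} \le 1 - x + \tfrac{x^2}{2}$ on $[0,2]$---cannot work. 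At $\SNR = 0$ the target $g(\log(2)(1+\SNR)) \le \tfrac{1}{2}$ holds with equality, but the termwise quadratic bound evaluates to $1 - \log(2) + \tfrac{1}{2}\log(2)^2 \approx 0.547 > \tfrac{1}{2}$; by continuity the bound stays above $\tfrac{1}{2}$ for an open range of small $\SNR$, so the approach fails precisely in the regime you yourself flag as delicate. (It also fails at the other end of the regime: near $x = 2$ the quadratic majorant of $e^{-x}$ gives $1$ versus $e^{-2}\approx 0.135$.) Because the inequality is tight at $\SNR = 0$, no term-by-term upper bound with slack at $x = \log 2$ can close it; you need an argument that exploits the mixture structure jointly rather than estimating each component separately.

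That is what the paper does: it rewrites the median equation as $\tfrac{1}{2} = (1-p)\,e^{-1/r_1} + p\,e^{-1/r_2}$ with $r_1 = \No/\median_Z$ and $r_2 = (\No+\Eh)/\median_Z$, and applies Jensen's inequality to $r \mapsto e^{-1/r}$, which is concave for $r \ge \tfrac{1}{2}$; this yields $\median_Z \le \log(2)(\No+\Es)$ with equality exactly when the two components coincide (i.e., at $\SNR = 0$), so no slack is lost where the bound is tight. The price is the precondition $\median_Z \le 2\No$ (so that $r_1, r_2 \ge \tfrac{1}{2}$), and this is where the paper spends the hypothesis $p\le\pmax$, via $F_Z(2\No)\ge\tfrac{1}{2}$ and a worst case over $\SNR$. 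Ironically, you already hold everything needed to secure that precondition more cleanly: your observation that $p \le \pmax$ is equivalent to $\log\!\big(\tfrac{2-2p}{1-2p}\big) \le 2$, combined with your already-proved bound $a \le \log\!\big(\tfrac{2-2p}{1-2p}\big)$, gives $a \le 2$, i.e., $\median_Z \le 2\No$. Had you followed your case split with the mixture-level Jensen step instead of Taylor bounds, you would have had a complete (and in that one respect slightly streamlined) version of the paper's proof.
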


\fref{thm:mainresult} has the following key implications:
(i) In the large-dimension limit, the proposed blind estimate $\estimatedNo$ bounds the average noise power~$\No$ from above, i.e., we have developed a pessimistic estimator. 
(ii) If $\SNR\to0$ or $p\to0$, then $\protect{\LB = \UB = \median_Z/\log(2)}$ in \fref{eq:yummysandwichbound}, and therefore $\No=\median_Z/\log(2)$.
Thus, either for $p\to0$ or $\SNR\to0$, the proposed estimate is exact, i.e., $\plimNo \median_Z/\log(2) = \No$. 
We summarize this important insight in the following remark.

\begin{remark} \label{rem:Nolargedimension} %oframed if page break required
In the large-dimension limit ($D\to\infty$), the proposed blind nonparametric estimate $\estimatedNo$ is pessimistic (i.e., overestimates the average noise power~$\No$), and becomes exact at low SNR or low activity rate $p$ (i.e., for sparse vectors).
\end{remark}

Next, we present bounds on the relative error of \fref{est:noisevariance}. These bounds depend on the activity rate $p$ and the SNR. The proof is given in \fref{app:errorproof}.
\begin{corollary} \label{cor:errorbound}
For $p\leq \pmax$ as in \fref{eq:probabilitycondition}, the relative error $\protect{\varepsilon \define |\estimatedNo-\No|/\No}$ of \fref{est:noisevariance} in the large-dimension limit 
is bounded as follows:
\begin{align} \label{eq:errorbound}
\!\conditionaltextstyle \frac{1}{\!1/\SNR\!+\!1/p\!+\!1\!} \leq
\displaystyle \lim_{D\to\infty}{\!\!\varepsilon} \leq 
\conditionaltextstyle \min\!\left\{\log\!\left(\!\conditionaltextstyle \frac{1\!-\!p}{1\!-\!2p}\!\right)\!, \SNR\right\}\!\!.
\end{align}
\end{corollary}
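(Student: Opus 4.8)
The plan is to derive \fref{cor:errorbound} as an almost immediate consequence of \fref{thm:mainresult}, using only the monotonicity of $x\mapsto \estimatedNolargeD/x$ and elementary algebra. Recall that \fref{thm:mainresult} establishes $\LB \leq \No \leq \UB \leq \estimatedNolargeD$ with $\estimatedNolargeD = \median_Z/\log(2)$ (in probability). The first step I would take is to discharge the absolute value in $\varepsilon \define |\estimatedNo-\No|/\No$: since $\No \leq \estimatedNolargeD$, the estimator overestimates in the limit, so $\lim_{D\to\infty}\varepsilon = \estimatedNolargeD/\No - 1 \geq 0$. Moving the probability limit inside the ratio here is legitimate by the continuous mapping theorem, because $\No>0$ is deterministic while $\estimatedNo$ converges in probability by \fref{lem:convergenceofmedian}. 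This reduces the corollary to two-sided bounds on the single deterministic quantity $\estimatedNolargeD/\No - 1$.

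The key observation is that $x\mapsto \estimatedNolargeD/x$ is decreasing on $x>0$, so evaluating it at the two ends of the interval $\No\in[\LB,\UB]$ gives
\begin{align}
\frac{\estimatedNolargeD}{\UB} - 1 \;\leq\; \lim_{D\to\infty}\varepsilon \;\leq\; \frac{\estimatedNolargeD}{\LB} - 1. \notag
\end{align}
From \eqref{eq:LB} and \eqref{eq:UB}, both $\LB$ and $\UB$ are $\median_Z$ multiplied by a factor depending only on $p$ and $\SNR$, so the common $\median_Z$ cancels against $\estimatedNolargeD=\median_Z/\log(2)$ in each ratio. I expect this cancellation to be the conceptual crux; once $\median_Z$ disappears, everything that remains is a function of $p$ and $\SNR$ alone, and the rest is bookkeeping.

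For the upper bound I would substitute \eqref{eq:LB}: the ratio becomes $\min\{\log(\tfrac{2-2p}{1-2p}),\,\log(2)(1+\SNR)\}/\log(2)$, and subtracting $1$ distributes into the minimum. The second branch collapses to $\SNR$, while for the first branch the identity $\tfrac{2-2p}{1-2p}=2\cdot\tfrac{1-p}{1-2p}$ splits the numerator as $\log(2)+\log(\tfrac{1-p}{1-2p})$, so the additive $\log(2)$ is absorbed by the $-1$, leaving the $p$-dependent branch of \eqref{eq:errorbound}. For the lower bound I would substitute \eqref{eq:UB}: writing $A\define(1-p)+p^2/(p+\SNR)$, the ratio is $1/A$, and combining terms gives $A=(p+\SNR-p\,\SNR)/(p+\SNR)$, hence $1/A-1 = p\,\SNR/(p+\SNR-p\,\SNR)$. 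A final cosmetic relaxation, replacing $-p\,\SNR$ by $+p\,\SNR$ in the denominator, enlarges the denominator and thus only lowers the fraction, so it remains a valid lower bound; dividing numerator and denominator by $p\,\SNR$ then yields the clean reciprocal form $1/(1/\SNR+1/p+1)$.

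The main obstacle is not any single computation but keeping the inequality directions consistent: one must use the \emph{lower} Theorem bound $\LB$ to \emph{upper}-bound $\varepsilon$ and the \emph{upper} bound $\UB$ to \emph{lower}-bound it, and the final relaxation of the lower bound must be checked to move in the weakening direction. No new constraints on $p$ are introduced, since the hypothesis $p\leq\pmax$ of \eqref{eq:probabilitycondition} is inherited verbatim from \fref{thm:mainresult} (it is exactly what guarantees $\tfrac{2-2p}{1-2p}>0$ and the validity of the sandwich on which the whole argument rests).
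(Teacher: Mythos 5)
Your proposal is structurally identical to the paper's proof: the two-sided bound you derive from the monotonicity of $x\mapsto \estimatedNolargeD/x$ is exactly the paper's display \fref{eq:relativeerrorbound}, and everything after it is the substitution step that the paper compresses into ``after some simplifications.'' Your treatment of the lower bound fills in that ellipsis correctly: with $A\define(1-p)+p^2/(p+\SNR)$ you get $1/A-1=p\,\SNR/(p+\SNR-p\,\SNR)$, and relaxing $-p\,\SNR$ to $+p\,\SNR$ enlarges the denominator, so the final form $1/(1/\SNR+1/p+1)$ is a legitimate weakening.

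The upper bound, however, contains an algebra slip at exactly the step you wave off as bookkeeping. Substituting \fref{eq:LB} and subtracting $1$, the $p$-dependent branch is
\begin{align}
\frac{\log(2)+\log\!\left(\frac{1-p}{1-2p}\right)}{\log(2)}-1
= \frac{1}{\log(2)}\log\!\left(\frac{1-p}{1-2p}\right)
= \log_2\!\left(\frac{1-p}{1-2p}\right)\!,
\end{align}
not $\log\!\left(\frac{1-p}{1-2p}\right)$: the additive $\log(2)$ in the numerator is indeed absorbed by the $-1$, but the \emph{divisor} $\log(2)$ survives, leaving a base-two logarithm that exceeds the natural-log expression in \fref{eq:errorbound} by the factor $1/\log(2)\approx 1.44$. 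This is not looseness that a sharper argument could remove: letting $\SNR\to\infty$ at fixed $p$ in \fref{eq:niceform} forces $\median_Z \to \No\log\!\left(\frac{2-2p}{1-2p}\right)$, so the limiting relative error genuinely approaches $\log_2\!\left(\frac{1-p}{1-2p}\right)$, which violates the natural-log bound. In other words, with $\log$ read as the natural logarithm (the paper's convention everywhere else, cf.\ the scaling factor $\log(2)\approx 0.6931$), the upper bound in \fref{eq:errorbound} is tighter than what \fref{thm:mainresult} can deliver; your derivation, like the paper's, establishes it only under the reading $\log = \log_2$ in that branch. So you have faithfully reproduced the paper's proof --- including, at the one step the paper leaves implicit, the same unjustified drop of the $1/\log(2)$ factor; as a standalone argument, your claim that the first branch ``leaves the $p$-dependent branch of \fref{eq:errorbound}'' is false as written and should be corrected to the base-two form.
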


An upper bound for the relative error $\varepsilon$ can be obtained if (i)~an upper bound on the SNR is known, or (ii) an upper bound on $p$ is known, since $\log\left(\frac{1-p}{1-2p}\right)$ is nondecreasing for $p\in(0,0.5)$.
In addition, we confirm the second implication discussed below \fref{thm:mainresult}: \fref{cor:errorbound} implies that if $p\to0$ (irrespective of the SNR) or $\SNR\to0$ (irrespective of the sparsity), then the proposed estimator becomes exact, i.e., $\varepsilon=0$ and therefore $\plimNo \No$.

\subsection{Analysis of \fref{est:signalpower}}
\label{sec:signal_estimator_analysis}
For the blind estimate $\estimatedEs$ of the average signal power~$\Es$, we use the following lemma, which is derived from the fact that the entries of the vector $\bmz \define \abssquared{\bmy}$ are i.i.d. with expected value of $\Ex{}{z_d}=\Ex{}{\|\bmy\|_2^2}/D=\Es+\No$, $d=1,\ldots,D$. 

\begin{lemma} 
Let $\bmy$ be a noisy BCG random vector with PDF as in  \fref{def:noisyBCG}. Then, according to the strong law of large numbers we have
\begin{align} \label{eq:limit_of_Es}
\conditionaltextstyle \frac{1}{D} \|\bmy\|_2^2 - \No \xrightarrow[\protect{\raisebox{3pt}[0pt][0pt]{\ensuremath{ \scriptstyle{D\to\infty}}}}]{\protect{\raisebox{-0.5pt}[0pt][0pt]{\ensuremath{ \scriptstyle{a.s.}}}}} \Es. 
\end{align}
\end{lemma}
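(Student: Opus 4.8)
The plan is to recognize the left-hand side as a centered empirical mean of i.i.d.\ random variables and then invoke Kolmogorov's strong law of large numbers (SLLN). Writing $z_d \define \abssquared{y_d}$, we have $\frac{1}{D}\|\bmy\|_2^2 = \frac{1}{D}\sum_{d=1}^D z_d$, so the claim reduces to showing that this sample average converges almost surely to $\Es+\No$; subtracting the deterministic constant $\No$ then delivers the stated limit $\Es$. Thus the entire argument is a matter of verifying that the hypotheses of the SLLN hold and evaluating the relevant mean.

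First I would confirm the i.i.d.\ structure. Because $\bmy=\bms+\bmn$ with the entries $s_d$ i.i.d.\ per the BCG model in \fref{def:BCG}, the entries $n_d$ i.i.d.\ $\CN(0,\No)$, and $\bms$ independent of $\bmn$, the observations $y_d$ are i.i.d.; consequently the nonnegative variables $z_d=\abssquared{y_d}$ are i.i.d.\ as well, sharing the common distribution of the RV $Z$ introduced below \fref{def:noisyBCG}.

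Next I would check the single integrability condition required by Kolmogorov's SLLN for i.i.d.\ summands, namely $\Ex{}{|z_d|}=\Ex{}{z_d}<\infty$ (note that no finite-variance assumption is needed), and compute this mean. Expanding $\abssquared{y_d}=\abssquared{s_d}+\abssquared{n_d}+2\Re\{s_d\overline{n_d}\}$ and using that $s_d$ and $n_d$ are independent and zero mean, the cross term has zero expectation, so $\Ex{}{z_d}=\Ex{}{\abssquared{s_d}}+\Ex{}{\abssquared{n_d}}=\Es+\No$. This is finite because both BCG parameters are finite (concretely $\Ex{}{\abssquared{s_d}}=p\,\Eh=\Es$), which licenses the SLLN and matches the mean $\Ex{}{z_d}=\Es+\No$ already asserted in the text preceding the lemma.

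With both hypotheses in place, the SLLN gives $\frac{1}{D}\sum_{d=1}^D z_d\to\Es+\No$ almost surely as $D\to\infty$, and subtracting the constant $\No$ finishes the proof. The hard part here is essentially nonexistent: the only content lies in confirming the finite-mean hypothesis and evaluating $\Ex{}{z_d}$, both of which follow immediately from the independence and zero-mean assumptions of \fref{sys:systemmodel1} together with the finiteness of the BCG parameters, while the convergence itself is a direct application of a standard limit theorem.
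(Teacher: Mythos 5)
Your proposal is correct and follows essentially the same route as the paper: the paper likewise treats the entries of $\abssquared{\bmy}$ as i.i.d.\ random variables with mean $\Es+\No$ and invokes the strong law of large numbers, with your write-up merely filling in the (straightforward) verification of the i.i.d.\ structure, the finite-mean hypothesis, and the expansion $\Ex{}{\abssquared{y_d}}=\Es+\No$ via the vanishing cross term.
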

To obtain \fref{est:signalpower} in~\fref{eq:signalpowerestimator}, we construct a blind estimator of $\Es$ by taking the left side of~\fref{eq:limit_of_Es} and replacing the average noise power $\No$ with the blind estimate~$\estimatedNo$ from~\fref{est:noisevariance}. To avoid negative values of $\Es$ that have no physical meaning, we assign a value of zero to our estimate if $\|\bmy\|_2^2/D - \estimatedNo$ is negative.
Since the estimate $\estimatedNo$ overestimates the true average noise power $\No$,  the blind estimate $\estimatedEs$ in \fref{eq:signalpowerestimator}  tends to underestimate the signal power. From  \fref{thm:mainresult} it follows that for $p\to0$ or $\SNR\to0$, the blind signal power estimate~$\estimatedEs$ is exact. 

\subsection{Analysis of \fref{est:SNR}}
\label{sec:SNR_estimator_analysis}
The blind SNR estimator is obtained by simply taking the ratio of $\estimatedEs$ in \fref{eq:signalpowerestimator} and $\estimatedNo$ in~\fref{eq:noiseestimator}.
For \mbox{$D\to\infty$}, the blind signal power estimate underestimates the average signal power and the noise power estimate overestimates the average noise power, which means that the blind SNR estimate in \fref{eq:SNRestimateomg} underestimates the SNR. 
From  \fref{thm:mainresult} it follows that for $D\to\infty$ with either $p\to0$ or $\SNR\to0$ the blind SNR estimate is exact.

\subsection{Analysis of \fref{est:MSE}}
\label{sec:MSE_estimator_analysis}
In order to analyze \fref{est:MSE}, we first assume that the average noise power $\No$ is  known. For this  scenario, we can borrow the following two theorems from \cite{mirfarshbafan2019beamspace}.

\begin{theorem}[Thm.~1 of \cite{mirfarshbafan2019beamspace}]
\label{thm:MSEappox}
Consider \fref{sys:systemmodel2}.
Then, Stein's unbiased risk estimate given by
\begin{align}
  \textit{SURE} \define \, &\conditionaltextstyle \frac{1}{D}\|\fest{\bmy}-\bmy\|_2^2 - \No \notag\\
 & \conditionaltextstyle +\frac{\No}{D} \sum_{d=1}^D\left(\frac{\partial\Re\{\fest{y_d}\}}{\partial\Re\{y_d\}}+\frac{\partial\Im\{\fest{y_d}\}}{\partial\Im\{y_d\}}\right)  \label{eq:complexSURE}
\end{align}
is an unbiased estimate of the MSE so that $\Ex{}{\textit{SURE}} = \MSE.$
\end{theorem}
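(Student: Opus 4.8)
The plan is to verify $\Ex{}{\SURE} = \MSE$ directly, by expanding the mean-square error and reducing the single nontrivial term through a complex-valued version of Stein's lemma. First I would rewrite the estimation error using \fref{eq:inputoutputrelation} as $\fest{\bmy}-\bms = (\fest{\bmy}-\bmy) + (\bmy - \bms) = (\fest{\bmy}-\bmy) + \bmn$. Expanding the squared $\ell_2$-norm entry-wise then produces three contributions: a data-fit term $\frac{1}{D}\|\fest{\bmy}-\bmy\|_2^2$, a cross term $\frac{1}{D}\sum_{d=1}^D 2\Re\{(\fest{y_d}-y_d)\overline{n_d}\}$, and the pure-noise term $\frac{1}{D}\|\bmn\|_2^2$. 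Taking expectations, the last term equals $\No$ by definition of $\No$, so the whole argument reduces to evaluating the expected cross term.

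To handle the cross term I would condition on $\bms$, so that each $n_d = y_d - s_d$ is a fresh zero-mean circularly-symmetric complex Gaussian, and split each summand into real and imaginary parts: writing $u_d \define \fest{y_d}-y_d$, one has $2\Re\{u_d\overline{n_d}\} = 2\Re\{u_d\}\Re\{n_d\} + 2\Im\{u_d\}\Im\{n_d\}$. Because $\bmn\sim\CN(\bZero,\No\bI)$, the parts $\Re\{n_d\}$ and $\Im\{n_d\}$ are independent $\mathcal{N}(0,\No/2)$ variables, and since $\festnoargs$ acts entry-wise, each of $\Re\{u_d\}$ and $\Im\{u_d\}$ is a function of $(\Re\{y_d\},\Im\{y_d\})$ alone. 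This is precisely the setting of the real-valued Stein lemma $\Ex{}{(X-\mu)g(X)} = \sigma^2\Ex{}{g'(X)}$, which I would apply separately in the two real dimensions with per-dimension variance $\sigma^2 = \No/2$; the weak-differentiability hypothesis on $\festnoargs$ from \fref{sys:systemmodel2} is exactly what licenses the underlying integration by parts despite isolated nondifferentiable points.

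Applying the lemma yields $\Ex{}{\Re\{n_d\}\Re\{u_d\}} = \frac{\No}{2}\Ex{}{\partial\Re\{u_d\}/\partial\Re\{y_d\}}$ and the analogous imaginary identity. Since $u_d = \fest{y_d}-y_d$, differentiating the $-y_d$ part contributes $-1$ in each real dimension, so summing over $d$ and dividing by $D$ gives $\frac{\No}{D}\sum_{d=1}^D \Ex{}{\partial\Re\{\fest{y_d}\}/\partial\Re\{y_d\} + \partial\Im\{\fest{y_d}\}/\partial\Im\{y_d\}} - 2\No$ for the expected cross term. Adding the data-fit term and the $\No$ from the noise term, the accumulated offset $-2\No$ combines with $+\No$ to leave the single $-\No$ that appears in \fref{eq:complexSURE}, so the right-hand side collapses to $\Ex{}{\SURE}$ and the claim $\Ex{}{\SURE} = \MSE$ follows (being conditionally unbiased given $\bms$, it remains unbiased after averaging over $\bms$). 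I expect the main obstacle to be the careful bookkeeping of this complex Stein step — getting the per-dimension variance $\No/2$ and the $-1$ derivative offsets right so that the additive constants collapse cleanly to the stated form.
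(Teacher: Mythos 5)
Your proof is correct: the decomposition $\fest{\bmy}-\bms=(\fest{\bmy}-\bmy)+\bmn$, the entry-wise expansion, Stein's lemma applied separately to $\Re\{y_d\}$ and $\Im\{y_d\}$ with per-dimension variance $\No/2$, and the $-2\No+\No=-\No$ bookkeeping all check out, with the tower property over $\bms$ closing the argument. Note that this paper does not prove \fref{thm:MSEappox} itself---it imports it as Thm.~1 of \cite{mirfarshbafan2019beamspace}---and the complex-valued derivation in that reference (and its real-valued counterpart in \cite{tibshirani2015stein}) follows essentially the same route you took, so your attempt matches the source's approach rather than offering a genuinely different one.
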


\begin{theorem}[Thm.~3 of \cite{mirfarshbafan2019beamspace}] \label{thm:SUREconvergence}
If $\festnoargs$ is pseudo-Lipschitz, then $\textit{SURE}$ in \fref{eq:complexSURE} converges to the MSE in the large-dimension limit, i.e., we have $\lim_{D\to\infty}\textit{SURE} = \MSE.$ 
\end{theorem}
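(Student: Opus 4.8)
The key structural observation I would start from is that, because $\festnoargs$ operates entry-wise and the entries $y_d$ of a noisy BCG vector (\fref{def:noisyBCG}) are i.i.d., the statistic $\SURE$ is nothing but an empirical average of i.i.d.\ random variables. Concretely, I would introduce the per-entry function
\begin{align}
g(y) \define |\fest{y}-y|^2 + \No\lefto(\frac{\partial\Re\{\fest{y}\}}{\partial\Re\{y\}}+\frac{\partial\Im\{\fest{y}\}}{\partial\Im\{y\}}\right),
\end{align}
so that $\SURE = \frac{1}{D}\sum_{d=1}^D g(y_d) - \No$. Since $g(y_1),\ldots,g(y_D)$ are i.i.d., the strong law of large numbers yields $\frac{1}{D}\sum_{d=1}^D g(y_d) \to \Ex{}{g(Y)}$ almost surely as $D\to\infty$, provided $\Ex{}{|g(Y)|}<\infty$. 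Hence $\SURE \to \Ex{}{g(Y)}-\No$ almost surely.

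It then remains only to identify this limit with the MSE, and this is where I would invoke unbiasedness (\fref{thm:MSEappox}): for \emph{every} $D$ we have $\Ex{}{\SURE}=\MSE$. Because the entries are identically distributed, $\Ex{}{\SURE} = \Ex{}{g(Y)}-\No$ holds independently of $D$, so $\Ex{}{g(Y)}-\No=\MSE$ and the almost-sure limit above is precisely $\MSE$, as claimed. In this way the whole argument reduces to three ingredients: (i) the entry-wise decomposition, (ii) the strong law of large numbers, and (iii) the already-established unbiasedness.

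The main obstacle — and the sole place where the pseudo-Lipschitz hypothesis is actually used — is verifying the integrability condition $\Ex{}{|g(Y)|}<\infty$ required by the strong law. I would argue as follows. Pseudo-Lipschitzness (of order two) supplies a constant $L$ with $|\festnoargs(a)-\festnoargs(b)|\leq L(1+|a|+|b|)|a-b|$ for all $a,b\in\complexset$; setting $b=0$ gives the at-most-quadratic growth $|\fest{y}|\leq c_1(1+|y|^2)$, whence $|\fest{y}-y|^2\leq c_2(1+|y|^4)$. Letting $b\to a$ at points of differentiability bounds the Jacobian entries, and therefore the divergence term, by $c_3(1+|y|)$. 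Combining these, $|g(y)|\leq c_4(1+|y|^4)$. Since the noisy BCG law is a finite mixture of complex Gaussians, all its absolute moments are finite; in particular $\Ex{}{|Y|^4}<\infty$, so $\Ex{}{|g(Y)|}<\infty$ and the strong law applies. A minor technical point to dispatch is weak differentiability: the divergence is defined only off a measure-zero set, but since $Y$ is absolutely continuous this affects neither the expectation nor the i.i.d.\ structure of the summands, so no modification of the argument is needed.
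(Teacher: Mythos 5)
Your proposal cannot be compared against an in-paper proof, because this paper does not prove \fref{thm:SUREconvergence} at all: the result is imported verbatim (as Thm.~3) from \cite{mirfarshbafan2019beamspace}, and no argument for it appears anywhere in the text. Judged on its own merits, your reconstruction is correct and follows what is essentially the canonical route for such statements: (i) since $\festnoargs$ acts entry-wise and the entries $y_d$ are i.i.d.\ under \fref{def:noisyBCG}, $\SURE$ is a shifted empirical mean of i.i.d.\ terms $g(y_d)$; (ii) the strong law of large numbers gives $\SURE \to \Ex{}{g(Y)}-\No$ almost surely; (iii) the limit is identified with $\MSE$ by the unbiasedness in \fref{thm:MSEappox}; and (iv) the pseudo-Lipschitz hypothesis enters exactly where it must, namely to secure $\Ex{}{|g(Y)|}<\infty$ via the quartic growth of $|\fest{y}-y|^2$, the linear growth of the divergence term, and the finiteness of all moments of the Gaussian-mixture law of $Y$. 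Leaning on \fref{thm:MSEappox} is a legitimate and economical move: it spares you from redoing Stein's integration-by-parts inside the limit argument, which a fully self-contained proof (and, in spirit, the cited reference) would otherwise carry out.

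Two points should be made explicit to close the argument formally. First, the i.i.d.\ entry structure is a hypothesis you import from \fref{def:noisyBCG}, not from the theorem statement as transcribed here: \fref{sys:systemmodel2} by itself only requires $\bms$ to be sparse and independent of $\bmn$, and without identically distributed independent entries your SLLN step has no force (this restriction matches the setting of \fref{sec:theory}, but it belongs among the stated hypotheses). Second, your bound on the divergence term tacitly uses that a pseudo-Lipschitz, hence locally Lipschitz, function is differentiable almost everywhere with weak derivative agreeing with the classical one (Rademacher's theorem); since $Y$ is absolutely continuous this is enough, but one sentence saying so is needed to justify bounding the weak partial derivatives by $c_3(1+|y|)$ almost surely.
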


\fref{thm:SUREconvergence} implies that if $\No$ were known perfectly, then one could perfectly estimate the MSE in the large-dimension limit without knowledge of the sparse signal vector~$\bms$. For smaller values of the dimension $D$, 
\fref{thm:MSEappox} only ensures equality in expectation (while the estimator remains MSE-optimal). Equality in expectation means that some realizations will underestimate and some realizations will overestimate the true MSE.\footnote{
We have to keep in mind that we use the estimated MSE to determine parameters in the estimation function $\festnoargs$ that minimize the MSE for each given realization of $\bmy$. 
Therefore, offsets that depend on the realization of the noisy observation $\bmy$ can be treated as a constant and thus be ignored, even if these offsets cause the MSE to take on negative values. In other words, we are not interested in the true value of the MSE, but rather in the \emph{shape} of the MSE function with respect to the parameters in $\festnoargs$.
}

\fref{est:MSE} is a blind version of SURE, in which we have replaced the true average noise power $\No$ by its estimate $\estimatedNo$. 
Consequently, for $D\to\infty$ and either $p\to0$ or $\SNR\to0$, we have that: (i) \fref{rem:Nolargedimension} states $\estimatedNo$ will be exact, from which it follows that $\estimatedMSE = \SURE$, (ii) \fref{thm:SUREconvergence} ensures $\SURE = \MSE$, and therefore (iii) \fref{est:MSE} will be exact ($\estimatedMSE=\MSE$) in this scenario.
For higher values of $p$ or $\SNR$, we know that $\estimatedNo$ tends to overestimate $\No$, but since this estimated quantity appears twice in \fref{eq:MSEnonparametricexplicitform} with different signs, we cannot derive a simple rule that states whether \fref{est:MSE} tends to underestimate or overestimate the MSE. 

\subsection{Analysis of \fref{est:sandwich}} \label{sec:sandwich_analysis}
\fref{est:sandwich} is derived as the mean of the  lower and upper bounds in \fref{eq:yummysandwichbound}, utilizing the SNR estimate from \fref{est:SNR} and an activity rate estimate $\estimatedp$ of the user's choice. \fref{est:sandwich} often improves the performance (achieves lower bias) compared to \fref{est:noisevariance}, especially at high SNR. In contrast to \fref{est:noisevariance},  we no longer know if the noise power from \fref{est:sandwich} is being overestimated or underestimated. As this estimator takes $\estimatedp$ as a parameter, it is especially useful in applications where $p$ is known a priori or bounded (e.g., in OFDM systems the number of nonzero delay taps of the channel's impulse response should not exceed the cyclic prefix length).

\subsection{Analysis of \fref{est:p}} \label{sec:p_estimator_analysis}
To estimate the activity rate, we can use the equivalence of vector norms \cite{golub13matrix} that states $\|\bmx\|_q\leq \zeronorm^{1/q-1/r} \|\bmx\|_r$ holds for any vector $\bmx\in\complexset^\zeronorm$ if $1\leq q < r$. 
In particular, it holds for a vector $\snonzeros\in\complexset^\zeronorm$ of length $\zeronorm \define \|\bms\|_0$ that contains only the nonzero entries of the sparse vector $\bms$. 
For such vector, we have that
$\|\snonzeros\|_q\leq \zeronorm^{1/q-1/r} \|\snonzeros\|_r$. 
Since the entries of $\bms$ that are zero do not contribute to these norms, we note that $\|\snonzeros\|_q = \|\bms\|_q$ and $\|\snonzeros\|_r = \|\bms\|_r$, and therefore
\begin{align} \label{eq:sparsenorminequality}
\|\bms\|_q\leq \|\bms\|_0^{1/q-1/r} \|\bms\|_r, \quad 1\leq q < r. 
\end{align}
Using \fref{eq:sparsenorminequality}, we can obtain a lower bound for the activity rate\footnote{The activity rate is $p \define \Ex{}{\|\bms\|_0}/D =\lim_{D\to\infty}{\|\bms\|_0}/D$. When $D$ is finite, we have ${\|\bms\|_0}/D \approx p$.}:  
\begin{align} \label{eq:norminequalitybound}
\conditionaltextstyle \frac{1}{D}\left(\frac{\|\bms\|_q}{\|\bms\|_r}\right)^{\frac{1}{1/q-1/r}} \leq \frac{\|\bms\|_0}{D} \approx p. 
\end{align}
 The inequality in \fref{eq:norminequalitybound} holds with equality if the nonzero entries of the signal are constant-modulus, i.e., if $|\snonzerose{d}| = |\snonzerose{d^\prime}|$, 
{$\forall\, d,d^\prime\in\{1,\ldots,L\}$.}
We obtain the blind estimator $\estimatedp(q,r)$ from the left side of \fref{eq:norminequalitybound}, by replacing $\bms$ with its noisy version $\bmy$.
With this substitution the inequality is not preserved (except if $\No=0$), but we use that definition of $\estimatedp(q,r)$ as a rough activity rate estimate instead of picking an arbitrary value.

\subsection{Analysis of \fref{est:EM}} \label{sec:EM_analysis}

\fref{est:EM} is a specialized variant of a classical EM algorithm for a two-component Gaussian mixture~\cite{hastie01a}, adapted to complex-valued and zero-mean variables.
We consider signal and noise power estimation from a noisy BCG signal as in \fref{def:noisyBCG}. 
To understand it as a Gaussian source-separation problem, we consider that each entry of $\bmy$ is a realization of either (i) just noise with distribution $\CN(0,\No)$, or (ii) signal plus noise with distribution $\CN(0,\No+\Es/p)$. 
Just-noise realizations occur with probability $1-p$, while signal-plus-noise realizations occur with probability $p$.
Using EM, we estimate the variances of the circularly-symmetric complex Gaussians $\No$ and $(\No+\Es/p)$, and mixture weights $1-p$ and $p$. 
We use our previous knowledge to set the mean of the two distributions to zero, unlike classical EM algorithms that also estimate the means. 
We make the following observations:
(i) This model allows any signal sparsity, as opposed to \fref{est:noisevariance} which assumes a maximum activity rate $\pmax$.
(ii) In the low SNR regime, EM may not be able to separate the noise and signal components, as $\No+\Es/p\approx\No$. 
(iii) The accuracy and the complexity of the algorithm will depend on the maximum number of iterations $\nitmax$, the tolerance $\tol$, the variance and weight initializations, and the noisy realization $\bmy$.

To avoid EM converging to pathological solutions with arbitrary initialization, we initialize the algorithm with the following two minimum assumptions: (i) The signal is sparse, or equivalently $\pinit\in(0,0.5)$, and (ii) the power of the entries of $\bmy$ that contain only noise is smaller than the power of the entries of $\bmy$ that contain signal plus noise, or equivalently $\Noinit\leq \|\bmy\|_2^2/D$.
This translates to initializing the Gaussian mixture variances $v$ and weights $w$ with $v_a \leq \|\bmy\|_2^2/D$, $w_b\in(0,0.5)$, $w_a = 1-w_b$, and $v_b = v_a+(\|\bmy\|_2^2/D-v_a)/w_b$. We verify that for this initialization, the average power of the mixture is $w_a v_a + w_b v_b = \|\bmy\|_2^2/D$, as expected.
%%%%%%%%%%%%%%%%%%%%%%%%%%%%%%%%%%%%%%%%%%%%%%%%%%%%%%%%%%%%%%%%%%%
%%%%%%%%%%%%%%%%%%%%%%%%%%%%%%%%%%%%%%%%%%%%%%%%%%%%%%%%%%%%%%%%%%%

%%%%%%%%%%%%%%%%%%%%%%%%%%%%%%%%%%%%%%%%%%%%%%%%%%%%%%%%%%%%%%%%%%%
%% RESULTS %%
%%%%%%%%%%%%%%%%%%%%%%%%%%%%%%%%%%%%%%%%%%%%%%%%%%%%%%%%%%%%%%%%%%%
\newcommand{\trials}{10000}
\newcommand{\figW}{0.45\textwidth}
\newcommand{\showvariance}{1}
\newcommand{\updatedfigures}{figures}
 
\section{Synthetic Results} \label{sec:synthetic_results}

We now characterize the accuracy of the estimators proposed in \fref{sec:nonparametricestimators}.
We use the sparse signal model in \fref{def:noisyBCG}. Without loss of generality, we fix the noise power to $\No=1$, while varying the signal power $\Es$, the activity rate $p$, and the dimension $D$ of the vectors.  
For different sets of parameters, we perform Monte--Carlo simulations with \SI{\trials}{} trials. 
In the plots, the thicker line with markers shows the average performance of an estimator, while the shaded area shows the region closer than one standard deviation away from the mean performance, a measure of the 
{\emph{precision} of the estimator}.

\subsection{Evaluation of the Noise Power, Signal Power, SNR, and Activity Rate Estimators} 
\fref{fig:estNo} shows the effect of the SNR on the performance of the proposed blind nonparametric estimator $\estimatedNo$ from \fref{eq:noiseestimator} and the proposed blind parametric estimate $\estimatedNosandwich$ from \fref{eq:sandwich_estimator}, for which we only include results using $\estimatedp(q,r)$ with $q=1$ and $r=\infty$ for the activity rate estimate, as these parameters showed the best performance in our simulations, outperforming other values of~$p$ and~$q$, and a fixed-value of $\estimatedp=0.25$ which is the center of the simulated range $p\in(0,0.5)$.
We also simulate the baseline EM estimate $\estimatedNoEM$ described in \fref{est:EM}, initialized with $\Noinit=0.4\|\bmy\|_2^2/D$ and $\pinit=\estimatedp(1,\infty)$, a maximum of $\nitmax=30$ iterations and early stopping if the total parameter change is below $\tol=0.1$\%.
As a baseline, we plot the genie-aided estimator $\sampleNo \define \frac{1}{D}\|\bmn\|_2^2$ that has separate knowledge of~$\bmn$ and the reference parameter~$\No$.

\begin{figure*}[p]
	\centering
	\subfigure[$p=0.1$, $D=64$\label{fig:estNoa}]{\includegraphics[width=\figW]{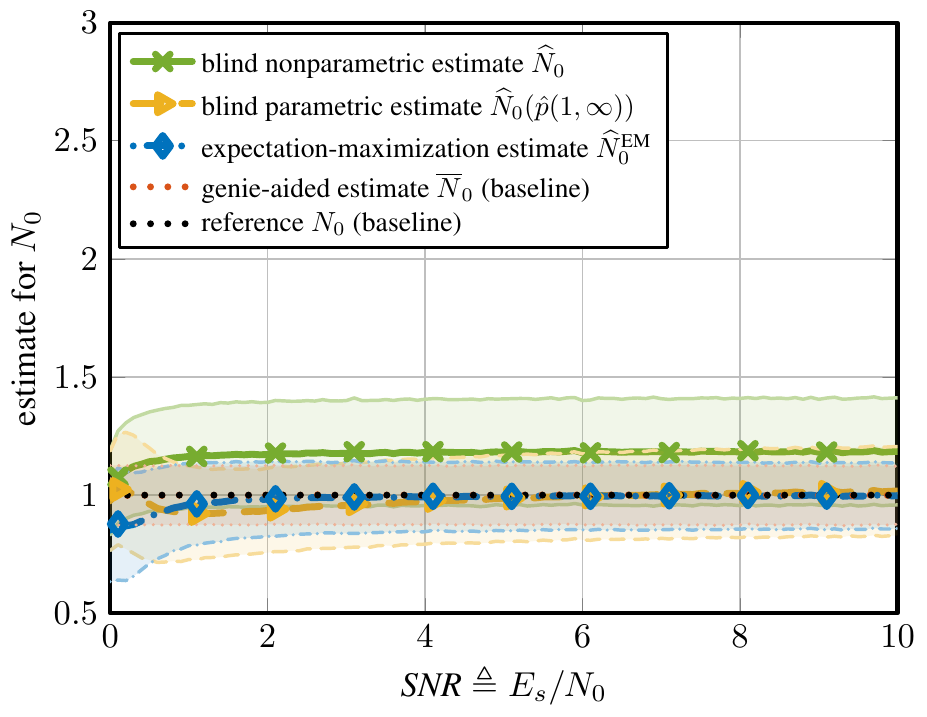}} 
	\hfill
	\subfigure[$p=0.1$, $D=256$\label{fig:estNob}]{\includegraphics[width=\figW]{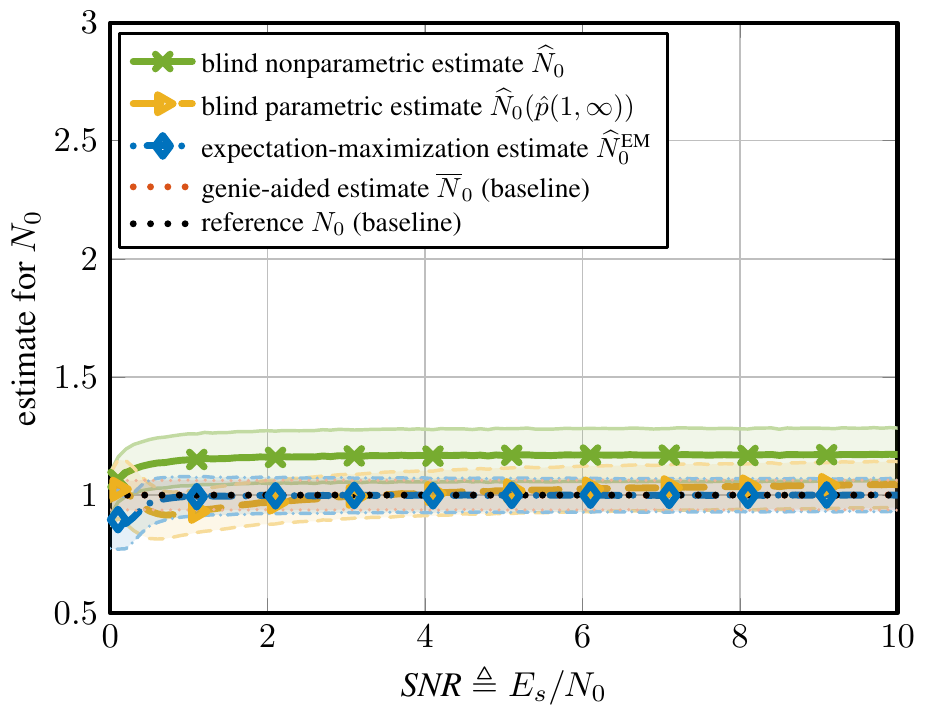}}
	\subfigure[$p=0.4$, $D=64$\label{fig:estNoc}]{\includegraphics[width=\figW]{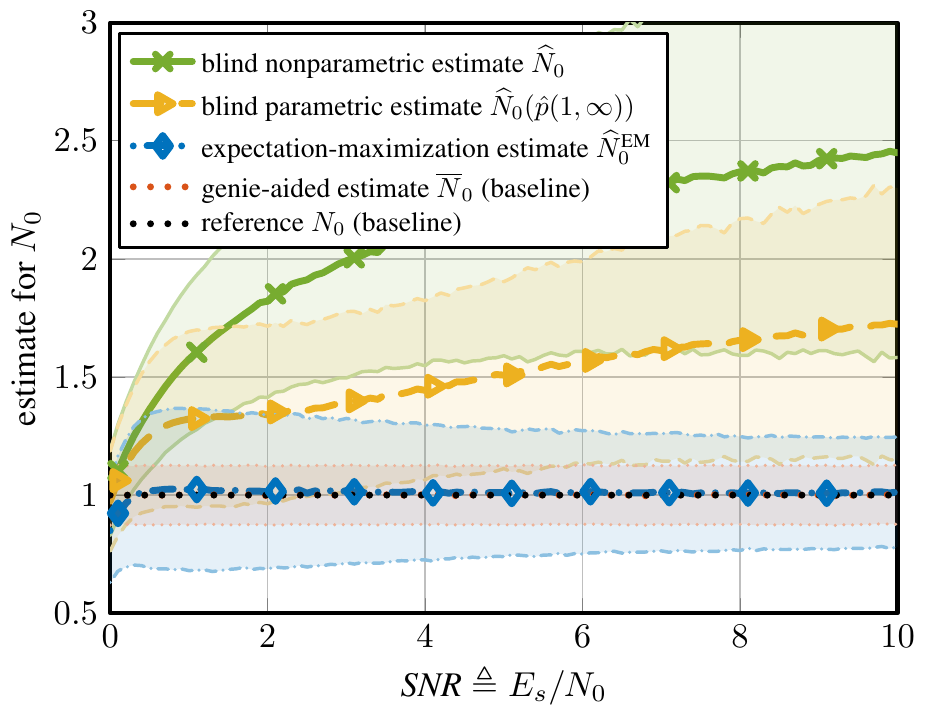}}
	\hfill
	\subfigure[$p=0.4$, $D=256$\label{fig:estNod}]{\includegraphics[width=\figW]{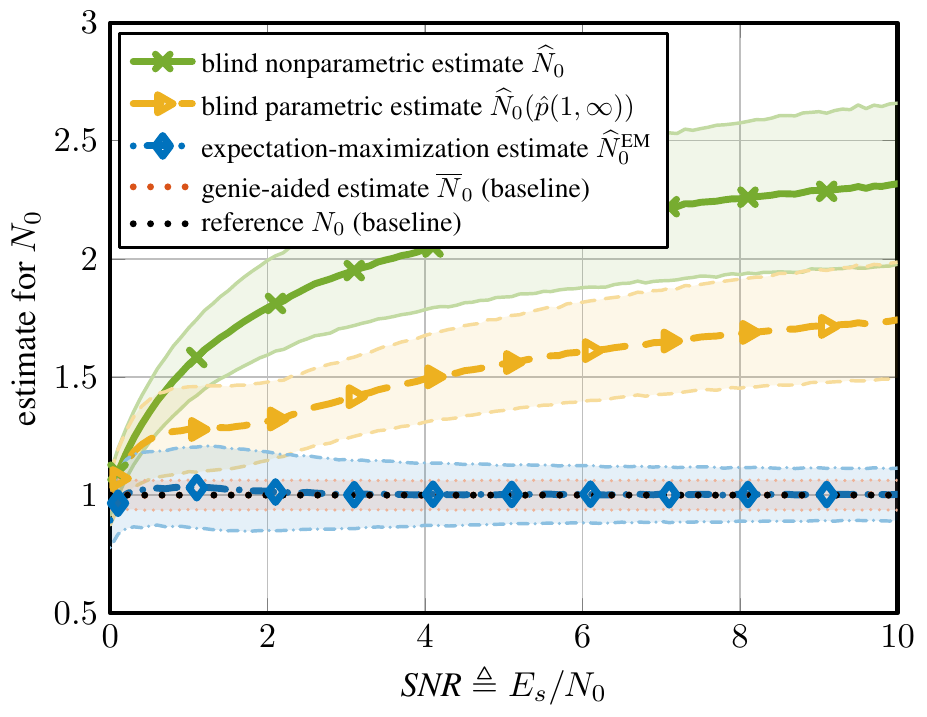}}
	\caption{Effect of varying the SNR on the proposed low-complexity blind {noise power estimators}, for different values of the activity rate $p$ and the dimension~$D$. Blind estimators are more accurate for smaller values of $p$, and the precision of all estimators is higher for larger values of $D$.}	
	\label{fig:estNo}	
\end{figure*}

\fref{fig:varyingSNR} shows the effect of the SNR on the performance of the proposed signal power and SNR estimators for an activity rate of $p=0.1$ and a dimension of $D=64$. 
In this case, $\estimatedSNR{\>\!}^\text{EM} \define \estimatedEsEM/\estimatedNoEM$, the genie-aided estimators that have separate knowledge of $\bms$ and~$\bmn$ are $\sampleEs \define \frac{1}{D}\|\bms\|_2^2$ and $\protect{\sampleSNR \define {\sampleEs}/{\sampleNo}}$, and the reference parameters are $\Es$ and $\SNR \define \Es/\No$.

\begin{figure*}[p]
	\centering
	\subfigure[Average signal power]{\includegraphics[width=\figW]{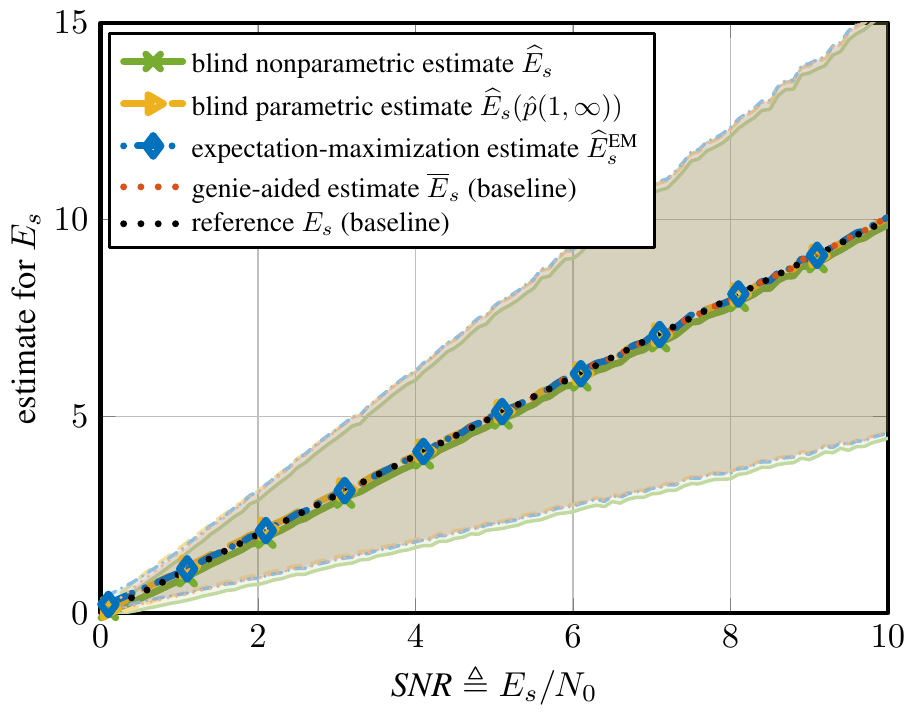}} 
	\hfill
	\subfigure[Signal-to-noise ratio]{\includegraphics[width=\figW]{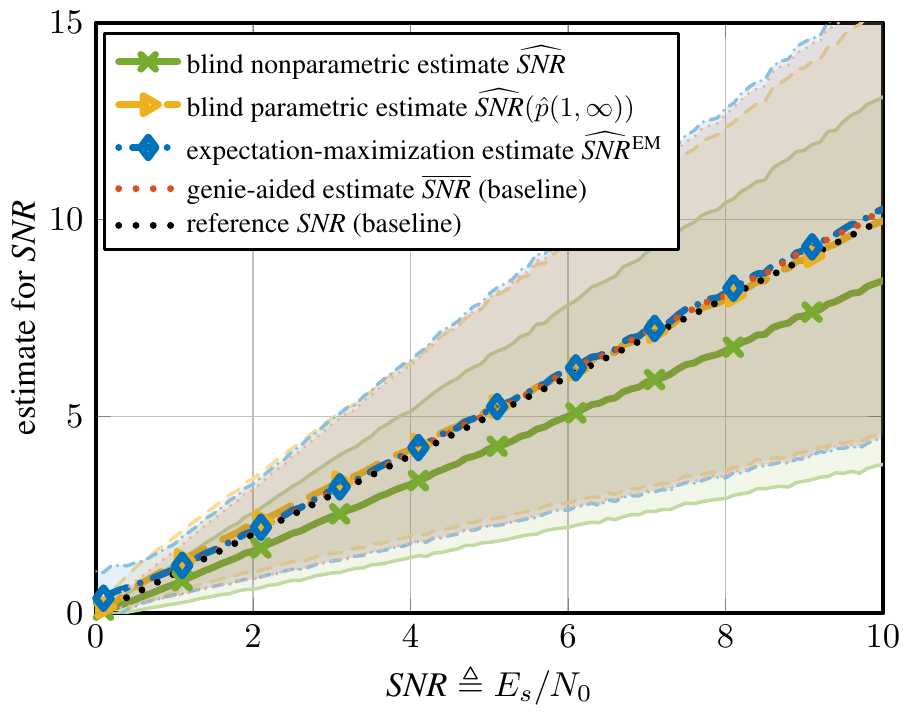}}
	\caption{Effect of varying the SNR on the proposed low-complexity blind {signal power and SNR} estimators, for a signal of dimension $D=64$ with $10\%$ nonzero entries.}
	\label{fig:varyingSNR}
\end{figure*}

\begin{figure*}
	\centering
	\subfigure[$\SNR=0.5$]{\includegraphics[width=\figW]{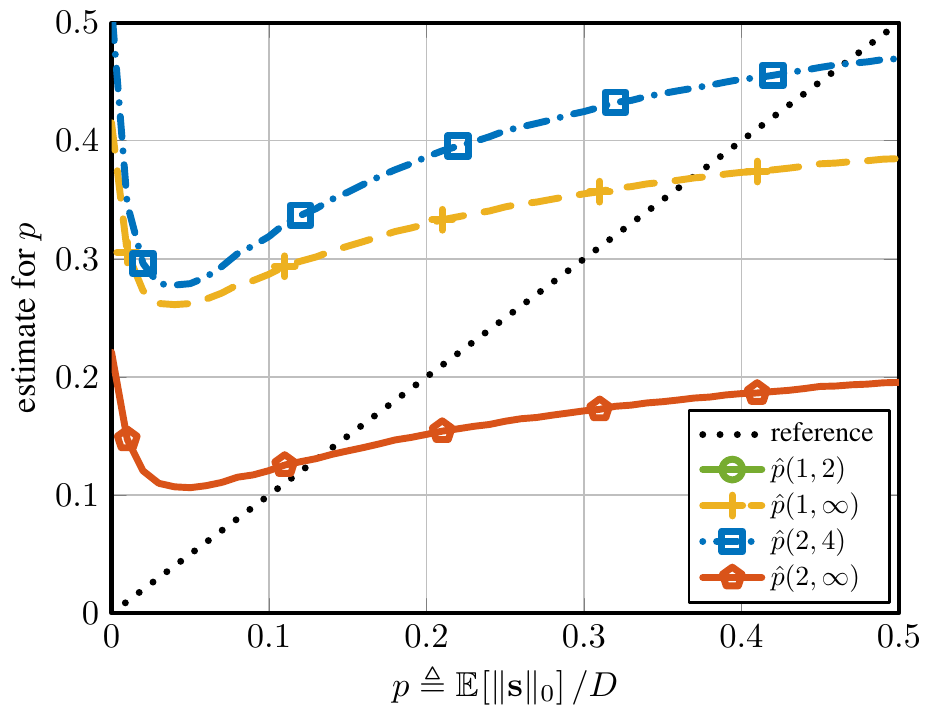}} 
	\hfill
	\subfigure[$\SNR=10$]{\includegraphics[width=\figW]{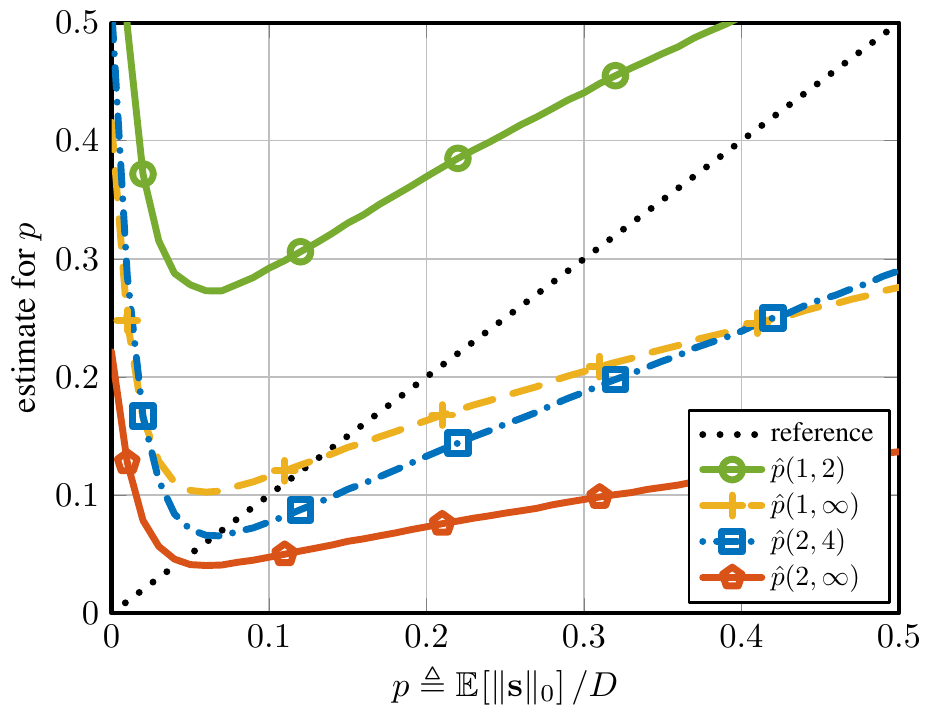}}
	\caption{Average accuracy of the blind parametric activity rate estimator $\estimatedp(q,r)$ for different combinations of $q$ and $r$ with vectors of length $D=64$. {For $\SNR=0.5$, the results for $\estimatedp(1,2)$ are not visible as the values exceed $0.5$.} While the overall accuracy of this family of estimators appears to be low, the estimate  obtained by $\estimatedp(1,\infty)$ combined with \fref{est:sandwich}, for example, shows superior performance compared to using the mean activity rate in the considered interval  (i.e., $0.25$).}
	\label{fig:pest}
\end{figure*}

From Figures~\ref{fig:estNo} and~\ref{fig:varyingSNR}, we observe the following facts about the blind nonparametric estimators:
(i) For sparse vectors ($\protect{p=0.1}$), our estimators have a precision comparable to that of the genie-aided estimators even for a small sample size of $D=64$. 
(ii) The {precision} of all considered estimators decreases as~$D$ increases. 
(iii) As predicted by our theory, the average noise power is overestimated while the signal power and SNR are underestimated. 
(iv) At low SNR, the {median-based} estimators for these three quantities become exact.
We also observe that the proposed blind parametric estimate $\estimatedNo(\estimatedp)$ with $\estimatedp = \estimatedp(1,\infty)$ is more accurate than the blind nonparametric estimate $\estimatedNo$ at high SNR. 
However, $\estimatedNosandwich$ has fewer theoretical guarantees and is not an upper bound on~$\No$.

\fref{fig:pest} shows the accuracy of  the blind, parametric activity rate \fref{est:p}. We see that  at low and high SNR, $\estimatedp(1,2)$ tends to overestimate $p$ while $\estimatedp(2,\infty)$ tends to underestimate~it. Overall, $\estimatedp(1,\infty)$ results in the best performance  when combined with \fref{est:sandwich}. Admittedly, this is only a rough estimator and we include it as an example of  what could be plugged into \fref{est:sandwich} or \fref{est:EM}. Nonetheless, we emphasize that side information about the signal's sparsity should be utilized whenever available.

In comparison with EM (cf.~Figures~\ref{fig:estNo} and~\ref{fig:varyingSNR}), our methods provide a less-accurate estimate at higher SNR, but require significantly lower complexity. 
The complexity of the baseline EM algorithm (in terms of the number of operations such as real-valued additions, real-valued multiplications, and exponentials) is more than  $\nit(16D+12)+3D$ operations---with early stopping, the average number of iterations observed in our simulations ranges from $\nit=8$ to $\nit=28$ depending on the SNR. 
In contrast, our proposed median-based noise estimator has an average complexity of no more than $7.7D+9$ operations if the median is computed using quickselect~\cite{tibshirani09a},
and avoids the evaluation of operations such as  exponentials and divisions. 
Hence, our proposed blind estimator is more than $17\times$ less complex than the baseline EM algorithm, which renders our method suitable (i) for low-complexity parameter estimation and (ii) as a potential initializer for EM-based estimators. 

\subsection{Accelerated Convergence of EM Using Median-Based Initialization} \label{sec:accelerated_convergence}
\renewcommand{\showvariance}{1}
\begin{figure*}
	\centering
	\subfigure[$p=0.1$, $\SNR=0.1$\label{fig:p0p1SNR0p1}]{\includegraphics[width=\figW]{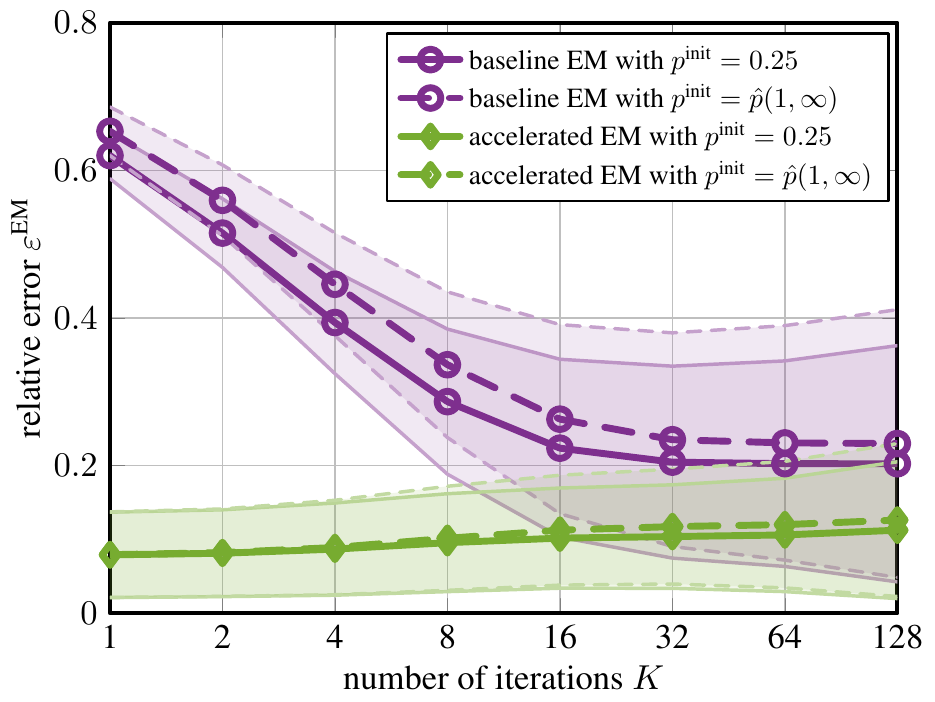}} 
	\hfill
	\subfigure[$p=0.1$, $\SNR=5$\label{fig:p0p1SNR5}]{\includegraphics[width=\figW]{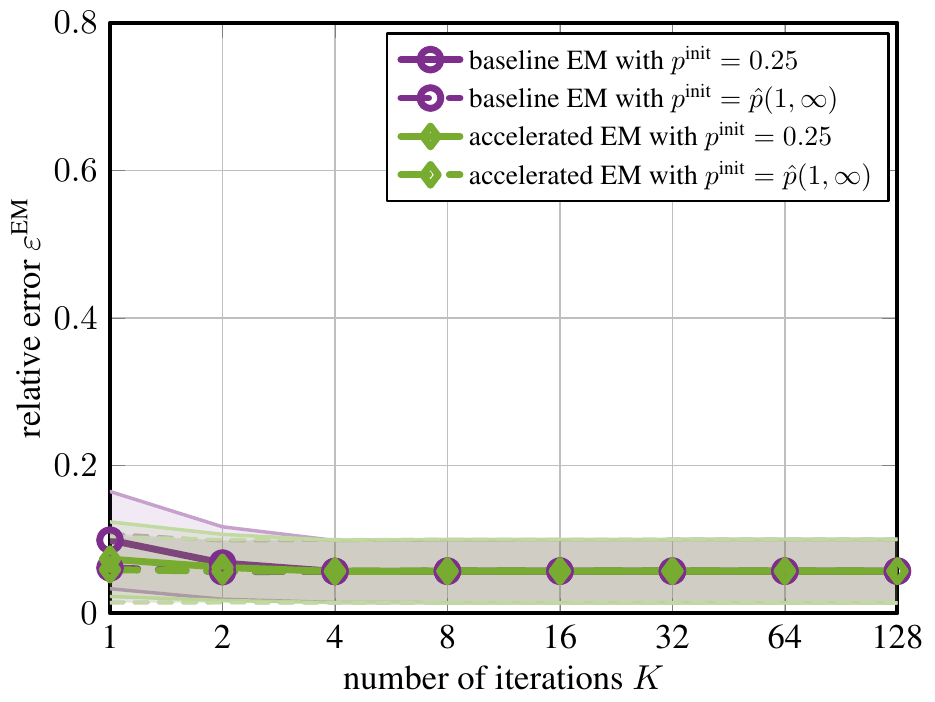}}
	\subfigure[$p=0.4$, $\SNR=0.1$\label{fig:p0p4SNR0p1}]{\includegraphics[width=\figW]{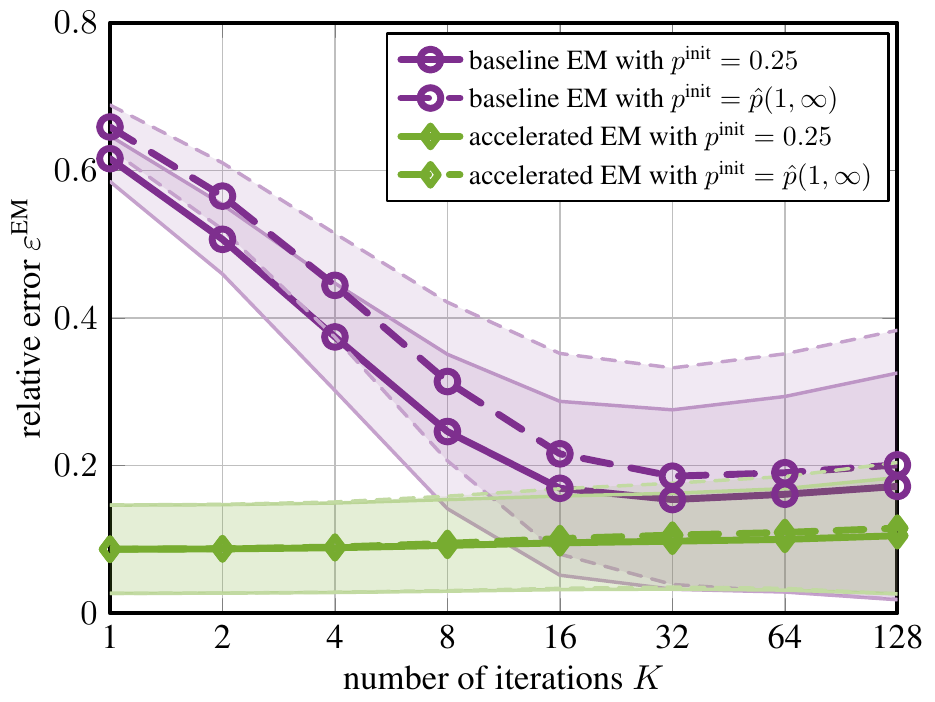}}
	\hfill
	\subfigure[$p=0.4$, $\SNR=5$\label{fig:p0p4SNR5}]{\includegraphics[width=\figW]{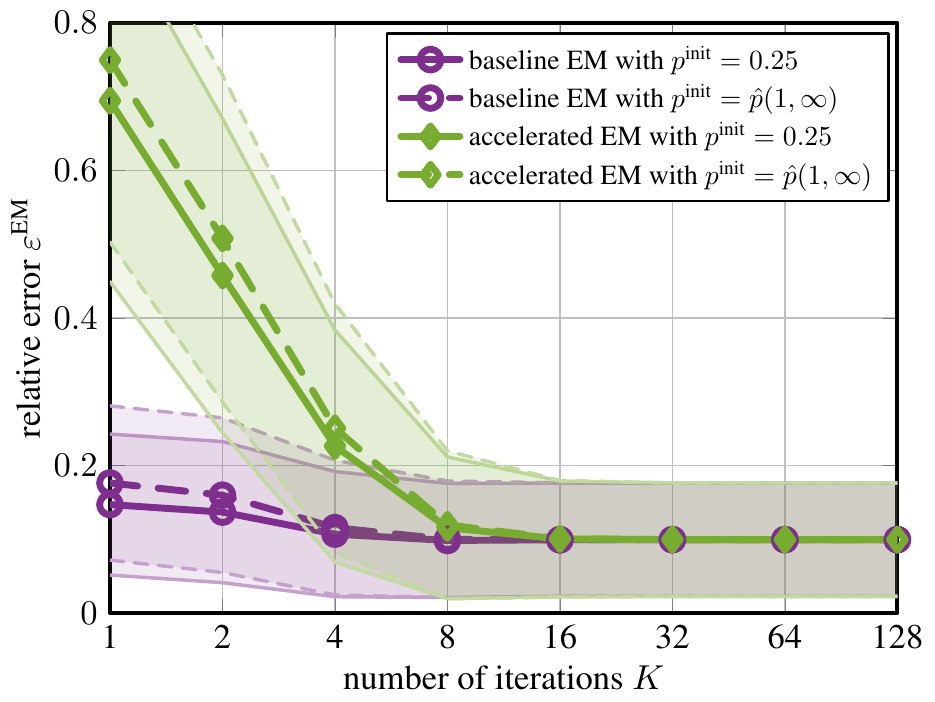}}
	\caption{Relative error $\varepsilon^\text{EM} \define |\estimatedNoEM-\No|/\No$ of the baseline and accelerated EM noise power estimators vs. the number of iterations $\nit$, for different values of SNR and activity rate $p$. Initializing with the low-complexity estimator $\estimatedNo$ accelerates convergence.}
	\label{fig:iterations}	
\end{figure*}

\fref{fig:iterations} shows the effect of initialization on the EM algorithm. 
To study the rate of convergence, we disable early stopping by setting $\tol=0$ so that the number of iterations is always $\nit=\nitmax$, and plot the relative error $\varepsilon^\text{EM} \define |\estimatedNoEM-\No|/\No$ as we vary $\nitmax$.
We compare the convergence of (i) the accelerated EM algorithm (diamond markers) which is initialized with the blind nonparametric estimate $\estimatedNo$, and (ii) the baseline EM algorithm (circular markers) initialized with a fixed initialization of $\SNR=5$, which corresponds to setting the noise power to $1/6$ of the received power $\|\bmy\|_2^2/D$.
We simulated various values of~$p$ and $\SNR$, and picked four examples that are representative.
At low SNR or high sparsity (low $p$), the accelerated EM algorithm converges already in the first iteration.
In contrast, the baseline EM algorithm converges in more than 16 iterations in some cases.
The only case we observe the baseline to outperform the accelerated EM algorithm is in \fref{fig:p0p4SNR5}, in which 
(i) the baseline has advantage since $\frac{1}{6}\|\bmy\|_2^2/D$ coincides exactly with the true value of $\No$, and 
(ii) the SNR is high and the sparsity is low, 
making it the worst case for the $\estimatedNo$ estimate used by the accelerated EM.
We also examine the effect of initializing the activity rate with (i) a fixed value of~0.25, versus (ii) the blind parametric estimate $\estimatedp(1,\infty)$, and we observe no significant difference, especially for the preferred accelerated EM algorithm; however, as $\estimatedp(1,\infty)$ showed superior performance than a fixed value when used in the parametric noise power estimator $\estimatedNo(\estimatedp)$, we prefer $\estimatedp(q,r)$ when no side information about the signal's sparsity is~available.

\subsection{Evaluation of the MSE Estimator}
\begin{figure*}
	\centering
	\subfigure[$\SNR=0.5$\label{fig:MSE_SNR0p5}]{\includegraphics[width=\figW]{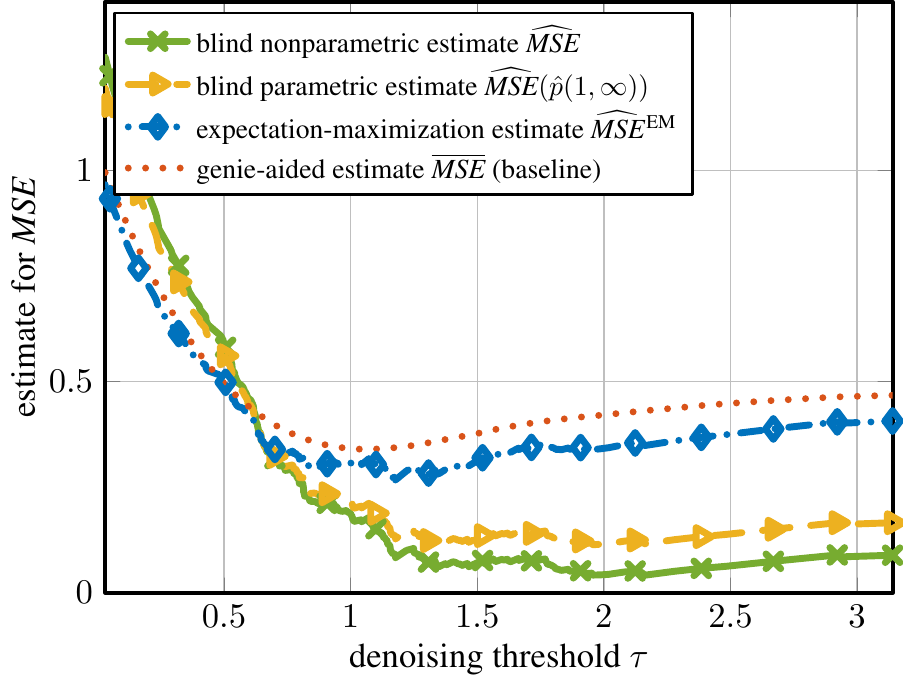}}
	\hfill
	\subfigure[$\SNR=10$\label{fig:MSE_SNR10}]{\includegraphics[width=\figW]{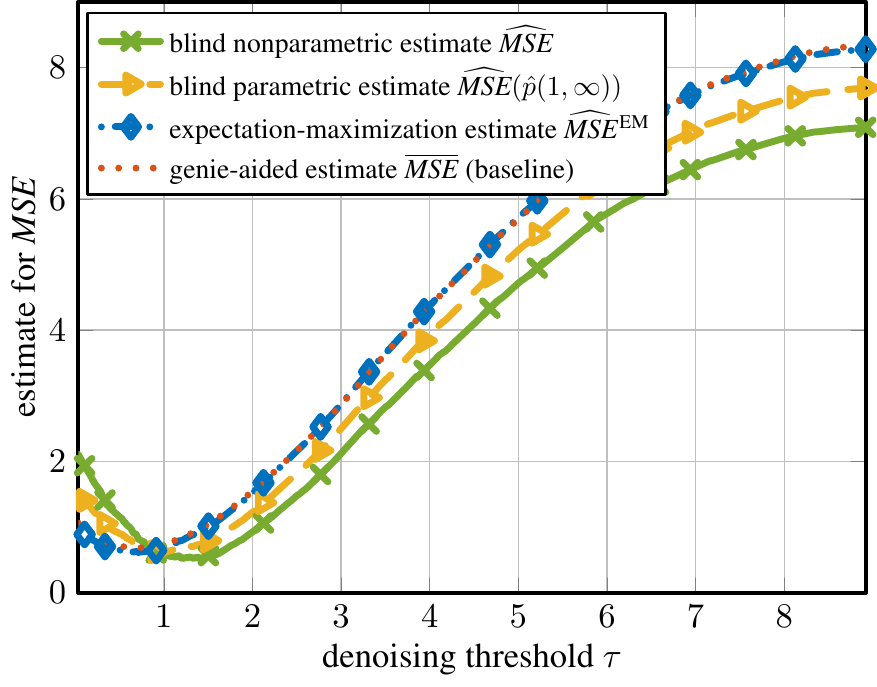}}
	\caption{Examples of the estimated MSE versus $\tau$ given one realization of $\bmy$ at (a) {low SNR} and (b) {high SNR}.}
	\label{fig:MSE}
\end{figure*}

To evaluate the performance of the MSE estimator, we consider \fref{sys:systemmodel2} with $\festnoargs$ being the soft-thresholding function defined as
\begin{align} \label{eq:soft-thresholding}
\fest{x;\tau} \define \left\{
\begin{array}{ll}
           \frac{x}{|x|}\max\{|x|-\tau,0\} & x \neq 0\\
           0 & x = 0,
\end{array}\right.
\end{align}
where {the} denoising threshold is a real number $\tau\geq0$. 

\fref{fig:MSE} shows two realizations of the estimated MSE as a function of the tuning parameter $\tau$. The only reference in this case is the genie-aided estimator $\sampleEo \define \frac{1}{D} \|\eta(\bmy;\tau)-\bms\|_2^2$. 
We picked two examples that are representative of what we have observed through multiple experiments with different system parameters to illustrate the following observations: 
(i) If the MSE function has a pronounced minimum as in \fref{fig:MSE_SNR10}, then the value of $\tau$ that minimizes the blind estimate tends to be very close to the value that minimizes the genie-aided MSE function. 
(ii) If the MSE function has a less pronounced minimum as in \fref{fig:MSE_SNR0p5}, then the value of $\tau$ that minimizes the blind estimate may be far from the value that minimizes the genie-aided MSE function. 
%
% Even so,
{In spite of that}, because the MSE function is flat near the minimum, the genie-aided MSE function evaluated at these two values of $\tau$ returns values that are similar.
In other words, (i) and (ii) summarize our observations that our algorithm finds a near-optimal (sub-optimal) denoising threshold $\tau$ when the MSE of the denoised channel is (not) sensitive to $\tau$.
Note that here we have only picked two representative realizations; in \fref{sec:channel_denoising}, we validate our estimator with quantitative results by showing the denoising performance averaged over many realizations.

\section{Applications to Nonparametric Channel-Vector Denoising} \label{sec:channel_denoising}
We show three applications in wireless systems, in which the quality of channel estimates is essential for data detection. 
Concretely, we show that our algorithms can be applied to adaptively denoise pilot-based channel estimates, resulting in a reduced (improved) bit-error-rate (BER).

\renewcommand{\trials}{10000} 

\subsection{Infinite-Resolution Massive Multiuser MIMO System} \label{sec:infinite-res}
We start with an application of \fref{est:MSE} for beamspace channel estimation. 
As in \cite{ghods19a}, we simulate an uplink massive multiuser (MU) MIMO system in which $\U=8$ single-antenna user equipments (UEs) transmit channel-estimation pilots and data to a {basestation (BS)} equipped with a uniform linear array of $\B=128$ antenna elements. 
The UEs are randomly placed with a uniform distribution in a \ang{120} circular sector around the BS, with a minimum distance of $10$\,m and maximum distance of $110$\,m from the BS. A minimum angular separation of \ang{4} between UEs is enforced. 
We assume UE-side perfect power control (UEs adjust their transmit power so that the received power at the BS is equal for all UEs), and we ignore quantization at transmitter and receiver sides, assuming infinite-resolution signals. 

We simulate a noiseless channel matrix $\bH\in\complexset^{\B\times\U}$ using line-of-sight (LoS) realizations from the mmMAGIC QuaDRiGa model~\cite{jaeckel2019quadriga} with a carrier frequency of $f_c=60$\,GHz. 
Each complex-valued entry $\bH_{d,u}$ of the channel matrix contains the attenuation and phase between the $u$th UE and the $d$th BS antenna.
For the channel estimation step, the UEs transmit orthogonal pilots. The maximum likelihood (ML) estimate of the channel matrix is obtained by right-multiplying the (noisy) received pilot sequence with the inverse of the orthogonal pilot matrix, resulting in
\begin{align}
\HML = \bH + \NCE, \label{eq:HML}
\end{align}
where $\bH\in\complexset^{\B\times\U}$ is the antenna-domain channel matrix, $\NCE\in\complexset^{\B\times\U}$ is complex Gaussian channel estimation noise with power $\NoCE$ per complex entry, and $\HML\in\complexset^{\B\times\U}$ is the ML channel estimate, which is a noisy observation of $\bH$.
The beamspace representation of the ML estimate is  obtained by taking a spatial Fourier transform across the antenna array resulting in
\begin{align}
\HMLbeamspace = \Hbeamspace + \NCEbeamspace. \label{eq:HMLbeamspace}
\end{align}
Here, beamspace-domain quantities are designated by a tilde.
Then, $\Hbeamspace=\bF\bH$ is the beamspace channel matrix, $\protect{\NCEbeamspace=\bF\NCE}$ has the same distribution as $\NCE$ as the discrete Fourier transform matrix $\bF$ is unitary, and $\HMLbeamspace$ is the beamspace ML channel estimate, which is a noisy observation of $\Hbeamspace$.
Column indices of $\Hbeamspace$ correspond to UEs, while row indices correspond to different angles-of-arrival to the BS. Since electromagnetic waves at high carrier frequencies experience strong attenuation, typical mmWave channels consist only of a small number of dominant propagation paths arriving at the BS.
Thus, each column of $\Hbeamspace$ (which is the beamspace \emph{channel vector} of one UE) will be approximately sparse, with many entries being close to zero.

By writing each column of \fref{eq:HMLbeamspace} as an independent equation, we can express the channel estimation problem in the form of \fref{sys:systemmodel1}, that is, each beamspace channel vector (that contains only few nonzero entries) corresponds to a sparse signal $\bms$. 
The sparsity property implies that we can perform denoising to improve the ML channel estimate.
After channel estimation, all UEs transmit data simultaneously using uncoded 16-QAM symbols and the BS performs data detection using the estimated channel vectors and linear minimum MSE equalization.

\begin{figure*}
%\vspace{-0.15cm}
	\centering
	\subfigure[Uncoded BER vs.~SNR]{\includegraphics[width=\figW]{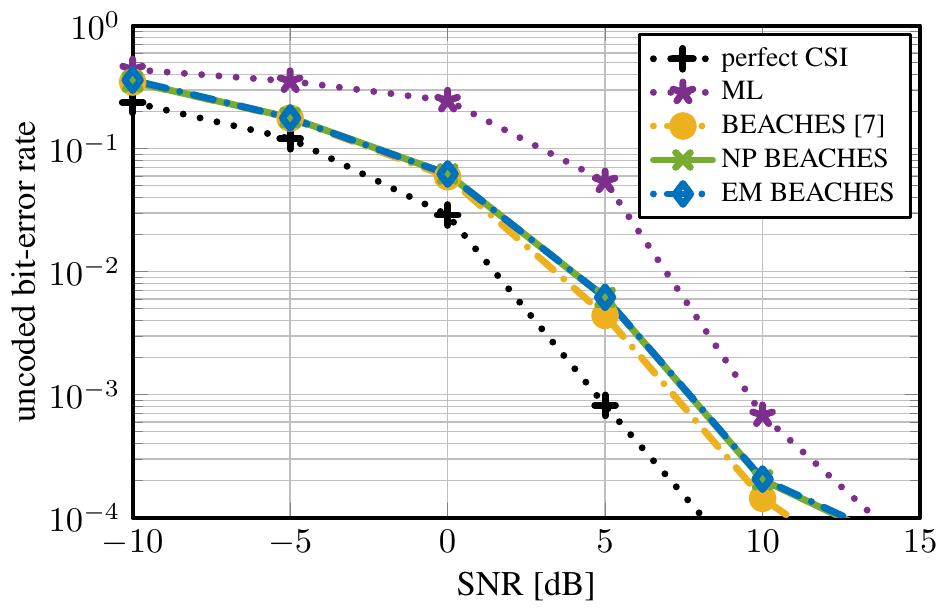}}
	\hfill
	\subfigure[MSE vs.~SNR]{\includegraphics[width=\figW]{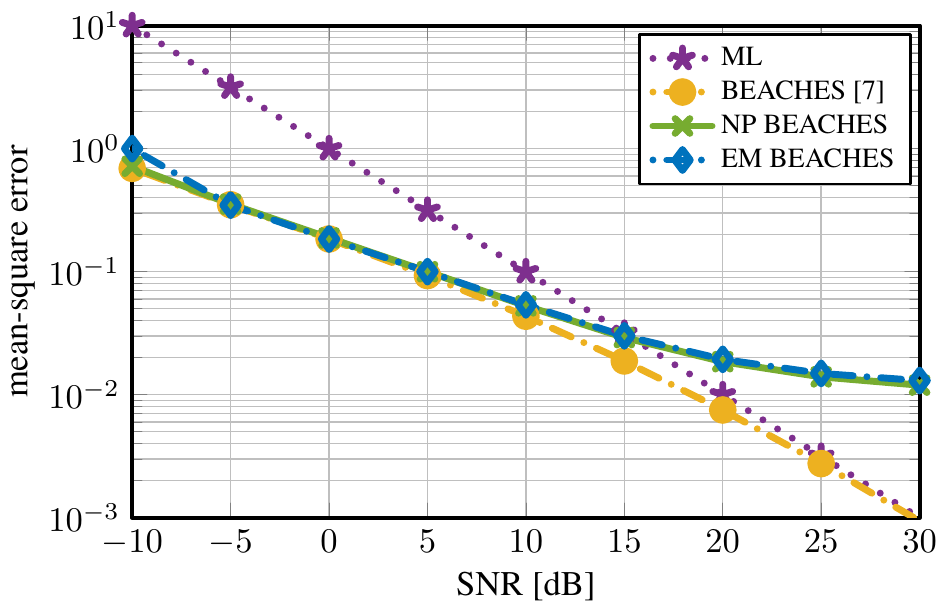}}
%	\vspace{-0.1cm}
	\caption{Uncoded BER (a) and MSE (b) of mmWave channel estimation methods, including the nonparametric (NP) BEACHES variant which estimates the noise power and denoising parameter directly from the receive vector.}
	\label{fig:inf_res}
\end{figure*}

\fref{fig:inf_res} shows simulation results for \SI{\trials}{} Monte--Carlo trials.
For different channel estimation methods, we compute the MSE of the channel estimates and the resulting BER. 
We simulate beamspace channel estimation (BEACHES) as in~\cite{ghods19a}, which denoises the columns of $\HMLbeamspace$ in~\fref{eq:HMLbeamspace} by applying the soft-thresholding function in \fref{eq:soft-thresholding};
the thresholding parameter $\tau$ is adaptively selected for each noisy observation by minimizing SURE using an $\setO(D\log(D))$ algorithm that assumes perfect knowledge of the average noise power~$\NoCE$.
We compare this to NP BEACHES, a new nonparametric BEACHES variant which also applies soft-thresholding to the columns of $\HMLbeamspace$, but uses the (nonparametric) threshold $\tau$ that minimizes $\estimatedMSE$ as in \fref{est:MSE}; since $\estimatedMSE$ is a nonparametric version of SURE, NP BEACHES does not require knowledge of $\NoCE$.
In addition, we include a variant that we call EM BEACHES, which uses a version of $\estimatedMSE$ in which $\estimatedNo$ in \fref{eq:MSEnonparametricexplicitform} is replaced by $\estimatedNoEM$ from \fref{est:EM}; for $\estimatedNoEM$, we use $\Noinit=0.4\|\bmy\|_2^2/D$ and $\pinit=\estimatedp(1,\infty)$, a maximum of $\nitmax=30$ iterations and early stopping if the total parameter change is below $\tol=0.1$\%.
The three versions of BEACHES as described above, after denoising the beamspace channel vectors, use the inverse Fourier transform to obtain an antenna-domain channel estimate to be used for data detection.
As a reference, we show the performance of perfect channel state information (CSI) that uses the ground truth (noiseless) channel matrix $\bH$, and ML estimation that simply takes the noisy observation~$\HML$ in \fref{eq:HML} as the estimate.

From \fref{fig:inf_res}, we observe that NP BEACHES achieves virtually the same performance as the original BEACHES algorithm (which requires knowledge of~$\NoCE$), except at high SNR where \fref{est:noisevariance} tends to overestimate $\NoCE$. We reiterate that NP BEACHES requires no parameters and exhibits the same low complexity of $\setO(D\log(D))$ as the original BEACHES algorithm, because the latter already sorts the entries of $\abssquared{\bmy}$, which we can reuse to compute the median in \fref{est:noisevariance}.
We observe that EM BEACHES achieves higher (worse) MSE at low SNR and does not outperform NP BEACHES at higher SNR.

In summary, denoising methods can significantly improve the ML channel estimate. All three BEACHES variants achieve similar BER performance. However, BEACHES needs knowledge of the noise power and EM BEACHES exhibits higher complexity than our nonparametric estimate, which renders NP BEACHES the preferable denoising method in this application scenario.

\subsection{Low-Resolution Massive Multiuser MIMO System} \label{sec:1-bit}
Next, we consider the same uplink massive MU-MIMO system as \fref{sec:infinite-res}, but in this case each radio-frequency (RF) chain at the BS is equipped with a pair of 1-bit analog-to-digital converters (ADCs) to quantize the in-phase and quadrature baseband signals. Each RF chain applies a quantization function $Q(x) \define \sign\left(\Re\{x\}\right) + j \sign\left(\Im\{x\}\right)$ to the baseband signal, where $j^2=-1$. For simplicity, we assume that the pilot matrix is an identity, i.e., each UE has a dedicated time slot to transmit one pilot while all other UEs are silent. The receive pilots then correspond to the 1-bit version of the ML channel estimate, which we call 1-bit ML\footnote{$\HMLonebit$ is simply the 1-bit version of $\HML$, not to be confused with the maximum likelihood channel estimate given a one-bit observation.}
\begin{align}
\HMLonebit = Q\left(\bH + \NCE\right). \label{eq:HML1b}
\end{align}
Here, quantization happens in the antenna domain and yet, when the quantized noisy channel is converted to beamspace, the sparse structure that is present in infinite-resolution beamspace channel vectors is also present in the coarsely quantized beamspace channel vectors. Thus, 
\begin{align}
\HMLonebitbeamspace = \bF Q\left(\bH + \NCE\right) \label{eq:HML1bitbeamspace}
\end{align}
has sparse columns that can be denoised.
For more details on the validity~of this statement, see~\cite{gallyas20a} where  $\HMLonebitbeamspace$ was decomposed in a linear combination of $\Hbeamspace$ plus a residual.

\begin{figure*}
	\centering
	\subfigure[Uncoded BER vs.~SNR]{\includegraphics[width=\figW]{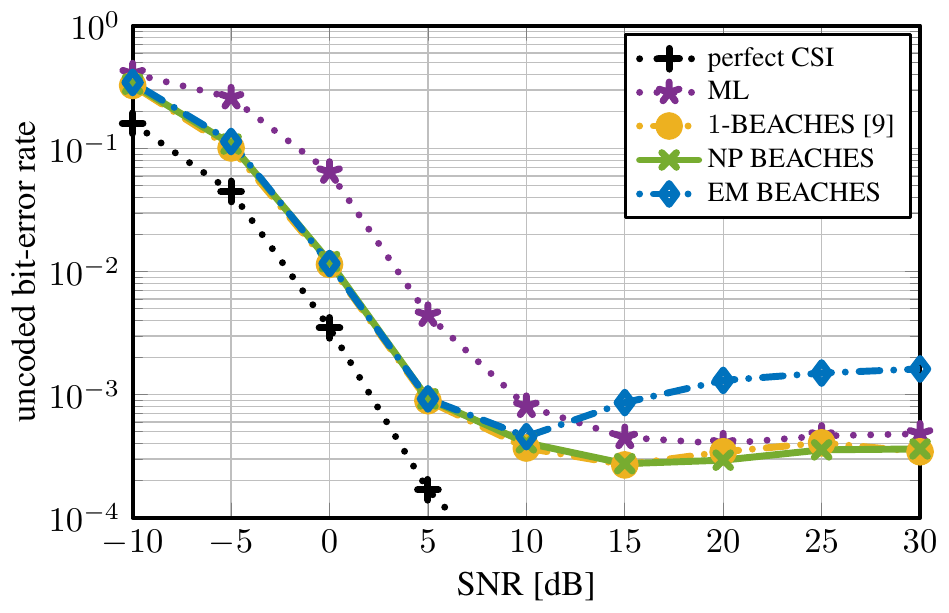}}
	\hfill
	\subfigure[MSE vs.~SNR]{\includegraphics[width=\figW]{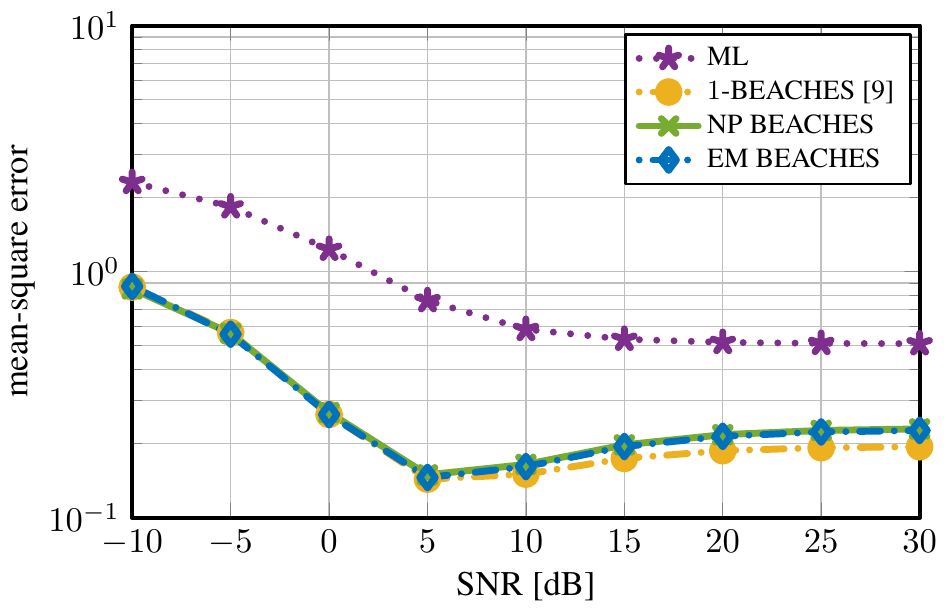}}
	\caption{Uncoded BER (a) and MSE (b) of channel estimation methods in a mmWave system with 1-bit quantization.}
	\label{fig:one_bit_res}
\end{figure*}

\fref{fig:one_bit_res} shows simulation results for \SI{\trials}{} Monte--Carlo trials.
For different channel estimation methods, we compute the MSE and BER. All UEs simultaneously transmit uncoded QPSK symbol, and the BS uses the estimated channels in order to perform 1-bit Bussgang linear minimum MSE equalization as described in \cite{nguyen19a}.

We simulate $1$-BEACHES as in~\cite{gallyas20a}. This denoising algorithm decomposes \fref{eq:HML1bitbeamspace} as $\HMLonebitbeamspace = \Hbeamspace + \widetilde{\bQ}$, where $\widetilde{\bQ}$ represents the equivalent noise-plus-quantization error 
{with average power}
$Q_0 = 2+\Es-4\Es/\sqrt{\pi(\Es+\No)}$ per entry~\cite{gallyas20a}. 
The $1$-BEACHES algorithm denoises the columns of $\HMLonebitbeamspace$ with the threshold $\tau$ that minimizes SURE, assuming perfect knowledge of $Q_0$.
We also use the nonparametric algorithms NP BEACHES and EM BEACHES (described in \fref{sec:infinite-res}) to denoise the columns of $\HMLonebitbeamspace$. 
After denoising the beamspace channel vectors, these three BEACHES variants use the inverse Fourier transform to obtain an antenna-domain channel estimate. 
We compare these estimators with $\HMLonebit$ from \fref{eq:HML1b}, and with the perfect CSI estimate that uses the ground truth $\bH$ as the channel estimate. 

Since NP BEACHES uses the median-based noise estimate (which in this case estimates the effective ``noise'' floor that includes quantization errors), it is robust to outliers and is able to achieve MSE and BER performance very close to $1$-BEACHES that has perfect knowledge of the noise-plus-quantization power. 
The EM estimator, however, strongly relies on the distribution of the noise and signal being Gaussian. 
{Here}, the signal is a realistic channel vector which is not Gaussian; 
more importantly, $\widetilde{\bQ}$ contains the effect of noise but also quantization error, which means the equivalent noise also deviates from a Gaussian distribution. We attribute the higher (worse) BER of EM BEACHES to these two factors. 
We note that $1$-BEACHES is designed specifically for 1-bit quantization and that the expression for $Q_0$ (which requires knowledge of the noise power and the signal power) would be different if the ADCs use a different number of bits. 
In contrast, our nonparametric denoiser is agnostic to the quantizer's resolution and automatically determines the power of the noise plus quantization, as long as the signal is approximately sparse and the noise is approximately Gaussian. 

\subsection{Cell-Free Communication System} \label{sec:cell-free}
\renewcommand{\U}{U}
\renewcommand{\B}{D}
We simulate an uplink cell-free communication system with $\U=16$ single-antenna UEs and $\B=256$ single-antenna BSs. The UEs and BSs are randomly placed with a uniform distribution in a square with \SI{1}{\kilo\meter^2} area. The UEs transmit orthogonal pilots followed by QPSK data. All of the UEs transmit simultaneously and the received signal at all the BSs is processed at a central processing unit (CPU) that performs channel estimation and linear minimum MSE detection.

We simulate a cell-free channel matrix $\bH$ using the model proposed by~\cite{ngo17a}, with parameters as in~\cite{song20a} but without power control and with a transmit power of $12.5$\,mW per UE.
As in \fref{eq:HML}, the ML estimate of the channel matrix is obtained by right-multiplying the pilot sequence received at the CPU with the inverse of the orthogonal pilot matrix (we used a Hadamard pilot matrix), resulting in
\begin{align}
\HML = \bH + \NCE. \label{eq:HMLcellfree}
\end{align}
The columns of $\bH$ (or \emph{channel vectors}) contain the attenuations and phases between one UE and all BSs. For each UE, the BSs that have LoS or are closer to this UE will receive significantly higher power than the other BSs that are not nearby. This means that in the cell-free system, the channel vectors are approximately sparse~\cite{gholamipourfard21a} and the ML estimate can be denoised. 
{Although the thermal noise variance at different basestations may differ, we assume i.i.d.\ noise in this paper.}

\fref{fig:cell_free} shows the results of \SI{\trials}{} Monte--Carlo trials. On the left, we plot the CDF of the MSE of the channel estimates, and on the right, the CDF of the root-mean-squared-symbol-error (RMSSE). The RMSSE is a measure of how far  the expected QPSK symbol is from the received data symbol after equalization with the channel estimates, and can be seen as equivalent to the error-vector-magnitude (EVM) for one UE. 

In \fref{fig:cell_free}, we observe a clear MSE improvement of the three denoising algorithms over the ML estimate: For a given value $x$, there are more realizations of channel estimates whose MSE is smaller than $x$ for denoised channels than for ML. 
The fact that denoising improves the channel estimates is reflected in the RMSSE, since equalization is more effective and the obtained symbols are closer to the expected constellation points. 
We consider the RMSSE requirement of \SI{17.5}{\percent} for QPSK from \cite[Table 6.5.2.2-1]{3gpp19a}. 
The probability that a UE meets the requirement grows from \SI{0.43}{} with ML channel estimation, to \SI{0.59}{} with NP BEACHES or EM BEACHES denoising, an increase of \SI{0.16}{}. BEACHES with perfect knowledge of the noise power has a slight additional advantage, with a probability of meeting the requirement of \SI{0.66}{}. 

\begin{figure*}
\centering
\subfigure[CDF of the mean-square error (MSE).]{\includegraphics[width=\figW]{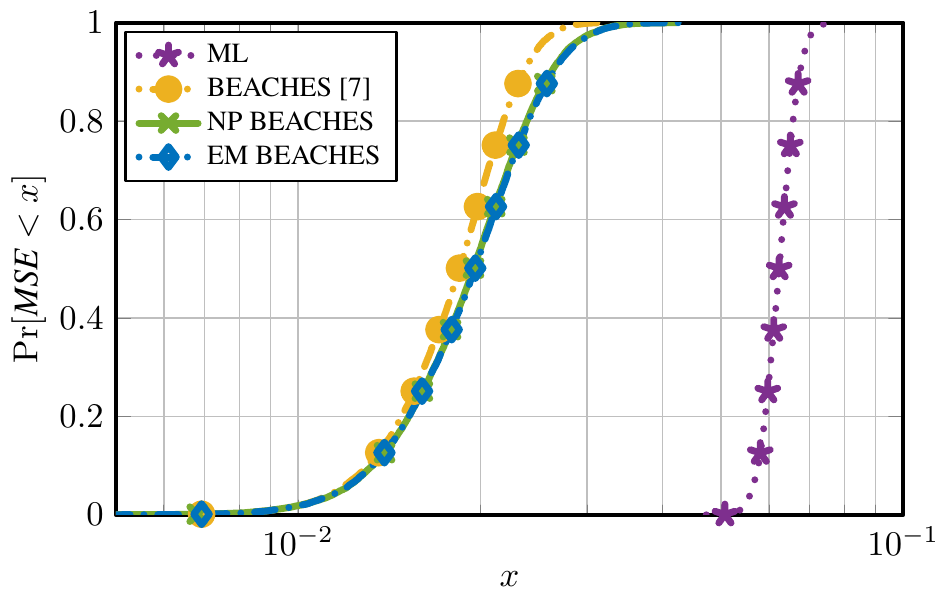}}
\hfill
\subfigure[CDF of the root-mean-squared-symbol error (RMSSE).]{\includegraphics[width=\figW]{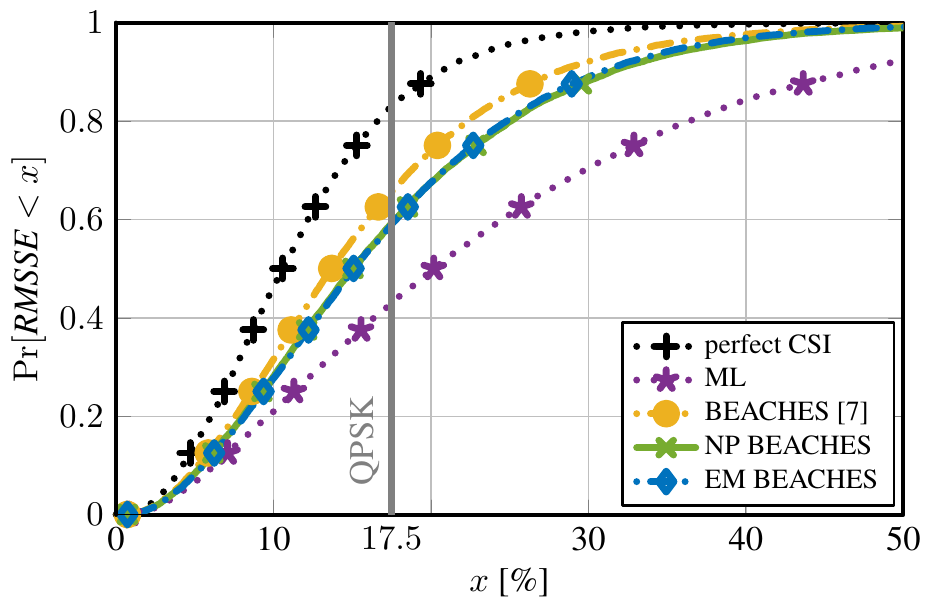}}
\caption{Cumulative distribution function (CDF) of the (a) mean-square error (MSE) and (b) root-mean-squared-symbol error (RMSSE) of channel estimation methods in a cell-free wireless system.}
\label{fig:cell_free}
\end{figure*}
%%%%%%%%%%%%%%%%%%%%%%%%%%%%%%%%%%%%%%%%%%%%%%%%%%%%%%%%%%%%%%%%%%%
%%%%%%%%%%%%%%%%%%%%%%%%%%%%%%%%%%%%%%%%%%%%%%%%%%%%%%%%%%%%%%%%%%%

%%%%%%%%%%%%%%%%%%%%%%%%%%%%%%%%%%%%%%%%%%%%%%%%%%%%%%%%%%%%%%%%%%%
%% CONCLUSIONS %%
%%%%%%%%%%%%%%%%%%%%%%%%%%%%%%%%%%%%%%%%%%%%%%%%%%%%%%%%%%%%%%%%%%%
 \section{Conclusions}
We have proposed blind estimators  for the average noise power, signal power, SNR, and MSE. Our estimators can be calculated at low complexity and only require the noisy observation vector, avoiding the need for additional pilot signals entirely. 
We have analyzed our estimators for a Bernoulli complex Gaussian sparsity model and evaluated their accuracy via simulations. 
Using three channel-vector denoising tasks in (i) a multi-antenna mmWave system, (ii) a 1-bit quantized multi-antenna mmWave system, and (iii) a cell-free system, we have demonstrated that our blind estimators can be used to develop a novel nonparametric denoiser that achieves comparable performance and the same complexity as BEACHES in \cite{ghods19a,mirfarshbafan2019beamspace} which requires knowledge of the average noise power. 
We believe that the proposed blind estimators find potential use in a large number of other wireless communication applications that contain sparse complex-valued signals.  

There are many avenues for future work. For signals that are less sparse (i.e., $p>0.421$), one may want to replace the median by a higher quantile and the scaling factor {$\log(2)$} needs to be adjusted accordingly---a derivation of such estimators would follow  immediately from our results in~\fref{sec:noise_estimator_analysis}.
Huber M-estimators \cite{huber64a} combine the idea of mean and median, and they may also prove useful for blind noise power estimation in the presence of sparse signals.
In the case of non-Gaussian, non-circularly-symmetric, or non-i.i.d sparse signals, new estimators can be tailored to exploit  specific statistical properties (e.g., structured sparsity). Extending the statistical model, e.g., to signals with correlation or structured sparsity, can lead to improved estimators and is left for future work.  In the case of colored noise (e.g., stemming from interference or large variations in radio-frequency circuitry), noise whitening techniques could be considered.
%%%%%%%%%%%%%%%%%%%%%%%%%%%%%%%%%%%%%%%%%%%%%%%%%%%%%%%%%%%%%%%%%%%
%%%%%%%%%%%%%%%%%%%%%%%%%%%%%%%%%%%%%%%%%%%%%%%%%%%%%%%%%%%%%%%%%%%

%%%%%%%%%%%%%%%%%%%%%%%%%%%%%%%%%%%%%%%%%%%%%%%%%%%%%%%%%%%%%%%%%%%
%% APPENDICES %%
%%%%%%%%%%%%%%%%%%%%%%%%%%%%%%%%%%%%%%%%%%%%%%%%%%%%%%%%%%%%%%%%%%%
\appendices
\section{Proof of \fref{thm:mainresult}}
\label{app:mainresult}
 
\subsection{Prerequisites}
In what follows, we will need the distribution of $\bmz \define \abssquared{\bmy}$, where we assume $\bmy$ is distributed according to \fref{def:noisyBCG}.
Given a circularly-symmetric complex Gaussian RV $A$ with variance $E_a$, the RV $B=|A|^2$ is exponentially distributed with CDF $F_B(b) \define 1 - e^{-\frac{b}{{E_a}}}$, $b\geq0$.
Then, the CDF of each entry of the absolute-square noisy observation is as follows.

\begin{definition}[Noisy BCG Power RV] \label{def:noisyBCGsquared}
Let $\bmy$ be as in \fref{def:noisyBCG} and let $\bmz \define \abssquared{\bmy}$.
Then, for $z_d\geq0$, the CDF of each  entry of $\bmz$ is given by
\begin{align} \label{eq:noisyBCGpower}
F_Z(z_d) \define \conditionaltextstyle (1-p)\left(1 - e^{-\frac{z_d}{\No}}\right)  
 + p\left(1 - e^{-\frac{z_d}{\No+\Eh}}\right)\!.
\end{align}
\end{definition}

\subsection{Upper Bounds on the Median}
We start with the following two upper bounds on the median $\median_Z$ of a noisy BCG power RV~$Z$ with CDF given in \fref{eq:noisyBCGpower}. 

\begin{lemma}  \label{lem:smollemma1}
For a noisy BCG power RV in \fref{def:noisyBCGsquared} with $p < 0.5$, the median is bounded  from above by
\begin{align}
\median_Z \leq \conditionaltextstyle \No\log\!\left(\frac{2-2p}{1-2p}\right).
\end{align}
\end{lemma}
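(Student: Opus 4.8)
The plan is to exploit the monotonicity of the CDF $F_Z$ together with a one-sided bound obtained by discarding the signal-plus-noise term. First I would recall from \fref{def:median} that, since $Z$ is absolutely continuous, the median $\median_Z$ is the unique point satisfying $F_Z(\median_Z) = \frac{1}{2}$, and that $F_Z$ in \fref{eq:noisyBCGpower} is strictly increasing on $[0,\infty)$. Consequently, to establish the claimed bound $\median_Z \leq M$ with $M \define \No \log\!\big(\frac{2-2p}{1-2p}\big)$, it suffices to show $F_Z(M) \geq \frac{1}{2}$; monotonicity then forces $\median_Z \leq M$.

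The key step is a lower bound on $F_Z$. Both summands in \fref{eq:noisyBCGpower} are nonnegative (each is a probability weight times an exponential CDF), so dropping the signal-plus-noise contribution gives $F_Z(z) \geq (1-p)\big(1 - e^{-z/\No}\big)$ for all $z \geq 0$. I would then evaluate this lower bound at $z = M$. Since $p < 0.5$ ensures $\frac{2-2p}{1-2p} > 1$, the threshold $M$ is well-defined and positive, and a short computation gives $e^{-M/\No} = \frac{1-2p}{2-2p}$, whence $(1-p)\big(1 - e^{-M/\No}\big) = (1-p)\cdot\frac{1}{2(1-p)} = \frac{1}{2}$. Combining, $F_Z(M) \geq \frac{1}{2}$, which completes the argument.

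Rather than a genuine obstacle, the crux is simply recognizing that $M$ is chosen precisely so that the noise-only term alone accumulates probability $\frac{1}{2}$ at $z = M$; the extra nonnegative signal term can only push the CDF higher, so the true median can only be smaller. I would flag that the hypothesis $p < 0.5$ is exactly what keeps $M$ finite and positive (as $p \to \frac{1}{2}^-$ the bound diverges, reflecting that the noise-only mass $1-p$ approaches $\frac{1}{2}$, so the threshold at which the noise term alone reaches $\frac{1}{2}$ must grow without bound). No delicate estimates or limiting arguments are needed: the entire proof rests on monotonicity plus one exact evaluation.
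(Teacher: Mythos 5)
Your proof is correct and takes essentially the same approach as the paper: both arguments discard the nonnegative signal-plus-noise term in $F_Z$ and rely on the same algebraic identity showing that the noise-only term equals $\tfrac{1}{2}$ exactly at $M = \No\log\!\left(\frac{2-2p}{1-2p}\right)$. The only difference is presentational---you verify $F_Z(M)\geq\tfrac{1}{2}$ and invoke monotonicity of the CDF, whereas the paper starts from $F_Z(\median_Z)=\tfrac{1}{2}$ and solves the resulting inequality for $\median_Z$.
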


\begin{proof}
We start from the definition of the median in \fref{eq:median} for the RV $Z$ with CDF as in~\fref{eq:noisyBCGpower}:
\begin{align}
(1-p)\left(1 - e^{-\frac{\median_Z}{\No}}\right) + p\left(1 - e^{-\frac{\median_Z}{\No+\Eh}}\right) & =\conditionaltextstyle  \frac{1}{2}. \label{eq:niceform}
\end{align}
Since the second term is nonnegative, we can omit it to obtain the following inequality:
\begin{align} \label{eq:expression0}
(1-p)\left(1 - e^{-\frac{\median_Z}{\No}}\right)   & \leq\conditionaltextstyle \frac{1}{2}.
\end{align}
Note that this bound will be useful for vectors $\bms$ that are sparse, i.e., where $p$ is small. We can simplify \fref{eq:expression0} as 
\begin{align}
1 - e^{-\frac{\median_Z}{\No}}   & \leq\conditionaltextstyle \frac{1}{2(1-p)} \\
\conditionaltextstyle \conditionaltextstyle \log\left(\frac{1-2p}{2-2p}\right) & \conditionaltextstyle \leq -\frac{\median_Z}{\No} \label{eq:logarithmcondition}, 
\end{align}
which leads to the upper bound on the median $\median_Z$.
In order to take the logarithm in \fref{eq:logarithmcondition}, we require $p\in(0,0.5)$.%$p\in(0,1/2)$.
\end{proof}

\begin{lemma} \label{lem:smollemma2}
For a noisy BCG power RV $Z$ in \fref{def:noisyBCGsquared} with $p\leq \frac{1/2 - e^{-2} }{ 1 - e^{-2} } \approx 0.421$, the median is bounded  from above by
\begin{align}
\median_Z \leq \log(2)(\No + \Es).
\end{align}
\end{lemma}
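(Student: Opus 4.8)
The plan is to turn the bound on the median into a single inequality in the SNR and then control it globally. Since the density of $Z$ is strictly positive on $[0,\infty)$, the CDF $F_Z$ in \fref{eq:noisyBCGpower} is strictly increasing, so the median is unique and $\median_Z \le M$ holds \emph{iff} $F_Z(M)\ge\tfrac12$, where I set $M \define \log(2)(\No+\Es)$. I would therefore aim to establish $F_Z(M)\ge\tfrac12$, i.e.
\[
(1-p)\,e^{-M/\No} + p\,e^{-M/(\No+\Eh)} \le \tfrac12 .
\]
It is worth recording the interpretation that motivates the target value $M$: since $\Ex{}{Z}=\No+\Es$, the claim is exactly $\median_Z \le \log(2)\,\Ex{}{Z}$, i.e. the median of the two-component exponential mixture is at most $\log(2)$ times its mean (which would hold with equality for a single exponential).

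Next I would make the SNR dependence explicit. Using $\Eh=\Es/p$ and $\Es=\SNR\,\No$, the two scaled exponents become $M/\No=\log(2)(1+\SNR)$ and $M/(\No+\Eh)=\log(2)\tfrac{p(1+\SNR)}{p+\SNR}$, so the target reads $g(\SNR)\le\tfrac12$ with $g(x)\define (1-p)\,2^{-(1+x)} + p\,2^{-p(1+x)/(p+x)}$ for $x\ge0$ and fixed $p$. The endpoints are favorable: $g(0)=\tfrac{1-p}{2}+\tfrac{p}{2}=\tfrac12$ (equality), while $g(x)\to p\,2^{-p}<\tfrac12$ as $x\to\infty$. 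The remaining work is to show $g$ never exceeds $\tfrac12$ in between. A derivative computation gives $g'(0)=0$, and I expect to verify that $g$ first decreases and then increases back toward its sub-$\tfrac12$ limit, so that the unique interior critical point is a minimum and $\max_{x\ge0} g(x)=\max\{g(0),g(\infty)\}=\tfrac12$. The hypothesis $p\le\pmax$ is what I would use to make this sign analysis conclusive; it is equivalently characterized by the worst-case algebraic condition $\tfrac{1/2-p}{1-p}=e^{-2}$, which rearranges to the stated threshold $\pmax=\tfrac{1/2-e^{-2}}{1-e^{-2}}$.

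The main obstacle is precisely this last, global, step. Because $g(0)=\tfrac12$ with $g'(0)=0$, the inequality is tight at the left endpoint, so a purely local second-order expansion at $x=0$ cannot certify it on all of $(0,\infty)$; one must rule out an interior rise of $g$ above $\tfrac12$. I also expect that the ``drop one term'' shortcut used in the proof of \fref{lem:smollemma1} is unavailable here for $p<\tfrac12$: discarding the noise term leaves $p\,e^{-M/(\No+\Eh)}\le\tfrac12$, which is vacuous since $\tfrac{1}{2p}>1$, and discarding the signal term is likewise too lossy because the two minima (noise term minimized as $\SNR\to0$, signal term as $\SNR\to\infty$) occur at opposite SNR extremes. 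Hence the noise and signal contributions cannot be decoupled, and the coupled expression $g$ must be treated directly; the cleanest route I foresee is a sign analysis of $g'$ showing a single interior zero, with $p\le\pmax$ supplying the margin that makes that argument go through.
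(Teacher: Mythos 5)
Your set-up is correct as far as it goes: the equivalence $\median_Z\le M \Leftrightarrow F_Z(M)\ge\tfrac12$, the reduction to $(1-p)e^{-M/\No}+p\,e^{-M/(\No+\Eh)}\le\tfrac12$, the SNR parametrization $g(x)\define(1-p)2^{-(1+x)}+p\,2^{-p(1+x)/(p+x)}$, the endpoint values $g(0)=\tfrac12$ and $\lim_{x\to\infty}g(x)=p\,2^{-p}<\tfrac12$, and $g'(0)=0$ all check out, and you correctly diagnose that the entire difficulty is the global bound $g(x)\le\tfrac12$ on $(0,\infty)$. But that is exactly the step you never carry out: ``I expect to verify that $g$ first decreases and then increases'' and ``the cleanest route I foresee is a sign analysis of $g'$'' are statements of intent, not arguments. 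Everything you do establish is bookkeeping; the proposal stops precisely where the lemma's real content begins, so as written there is a genuine gap.

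It is instructive to see how the paper closes this gap, because it sidesteps your global calculus problem entirely: it writes the median equation as $\tfrac12=(1-p)g(\No/\median_Z)+p\,g((\No+\Eh)/\median_Z)$ with $g(r)\define e^{-1/r}$, which is concave for $r\ge\tfrac12$, and applies Jensen's inequality to get $\tfrac12\le g((\No+\Es)/\median_Z)$, i.e., $\median_Z\le\log(2)(\No+\Es)$ in one line. The hypothesis $p\le\pmax$ has a precise job there: it guarantees $F_Z(2\No)\ge\tfrac12$ uniformly over the SNR, hence $\median_Z\le2\No$, hence both Jensen arguments lie in the concavity region $r\ge\tfrac12$; the $e^{-2}$ in $\pmax$ is simply $e^{-2\No/\No}$ from evaluating $F_Z$ at $2\No$, which is where your reverse-engineered identity $\frac{1/2-\pmax}{1-\pmax}=e^{-2}$ actually originates. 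By contrast, your conjecture that $p\le\pmax$ is what ``makes the sign analysis conclusive'' is misplaced: if you finish your route, the hypothesis never appears. Indeed, $\sign(g'(x))=\sign(\phi(x))$ with $\phi(x)\define 2\log\frac{p}{p+x}+\log(2)\frac{x(1+x)}{p+x}$, and the numerator of $\phi'(x)$ is the quadratic $\log(2)x^2+2(p\log 2-1)x+p(\log 2-2)$, whose constant term is negative, so it has exactly one positive root; thus $\phi$ starts at $\phi(0)=0$, dips negative, then increases to $+\infty$, crossing zero exactly once, so $g$ has a unique interior critical point (a minimum) and $\sup_{x\ge0}g(x)=\max\{g(0),\,p\,2^{-p}\}=\tfrac12$ for every $p\in(0,1]$. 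Completing your argument this way would actually prove the lemma with no sparsity condition at all---stronger than stated---but that completion \emph{is} the proof, and it is missing from your proposal.
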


\begin{proof}
We start from the definition of the median as in \fref{eq:niceform}.
Let us define the function $g(r) \define e^{-1/r}$ with $r>0$. We can now rewrite \fref{eq:niceform} as follows:
\begin{align}
\conditionaltextstyle \frac{1}{2} &= \conditionaltextstyle (1-p)g\!\left(\frac{\No}{\median_Z}\right)  + p g\!\left(\frac{\No+\Eh}{\median_Z}\right)\!. \label{eq:nicejensenexpression}
\end{align}
The function $g(r)$ is concave for $r \geq 1/2$. Therefore, to ensure concavity of $g(r)$ in \fref{eq:nicejensenexpression}, we~need
\begin{align}
\conditionaltextstyle \frac{\No}{\median_Z}  \geq \frac{1}{2} \quad \text{and} \quad 
\frac{\No+\Eh}{\median_Z}  \geq \frac{1}{2}. \label{eq:twoconcavityconditions}
\end{align}
The two conditions in \fref{eq:twoconcavityconditions} are guaranteed as long as $\protect{2\No\geq\median_Z}$. Because CDFs are nondecreasing functions, requiring $\protect{2\No\geq\median_Z}$ is equivalent to requiring $\protect{F_Z(2\No)\geq F_Z(\median_Z)=1/2}$, {which we can simplify as}
{\begin{align}
\underbrace{(1-p)\left(1 - e^{-\frac{2\No}{\No}}\right) + p\left(1 - e^{-\frac{2\No}{\No+\Eh}}\right)}_{F_Z(2\No)}  \geq \underbrace{\conditionaltextstyle\frac{1}{2}}_ {F_Z(\median_Z)} \\
\conditionaltextstyle \frac{1}{2} - e^{-2}  \conditionaltextstyle \geq   p \left(  e^{-\frac{2}{1+\Es/(p\No)}} - e^{-2} \right). \label{eq:pconditionSNR}
\end{align}}
Finally, to ensure \fref{eq:pconditionSNR} holds for all values of $\Es$ and $\No$, we require
\begin{align}
\conditionaltextstyle\frac{1}{2} - e^{-2} & \geq   p \left(  1 - e^{-2} \right)\!,
\end{align}
which implies that the condition $p\leq\pmax$ in \fref{eq:probabilitycondition} ensures concavity of $g(r)$. 
Then, assuming $p\leq\pmax$, we can now use Jensen's inequality on the expression in \fref{eq:nicejensenexpression} to get
\begin{align}
\conditionaltextstyle\frac{1}{2} \!\leq\!  \conditionaltextstyle g\left((1\!-\!p)\frac{\No}{\median_Z} + p\frac{\No+\Eh}{\median_Z}\right) \!=\! g\left(\frac{\No+\Es}{\median_Z}\right)\!.
\end{align}
We can now simplify this expression to
\begin{align}
\median_Z & \leq \log(2)(\No + \Es), \label{eq:probabilityupperbound}
\end{align}
which is the inequality in \fref{lem:smollemma2}.
\end{proof} 

\subsection{Lower Bound on the Median}
We now establish the following lower bound on the median. 
\begin{lemma} \label{lem:smollemma3}
For a noisy BCG power RV $Z$ in \fref{def:noisyBCGsquared} with $p \in (0,1]$, the median is bounded  from below by
\begin{align} \label{eq:bound3}
\conditionaltextstyle \frac{\log(2)\No}{ (1-p)+\frac{p^2}{p+\SNR}  } &  \leq \median_Z.
\end{align}
\end{lemma}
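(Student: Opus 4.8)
The plan is to work directly from the defining equation of the median. Substituting the CDF \fref{eq:noisyBCGpower} into $F_Z(\median_Z)=1/2$ and subtracting from one, the median must satisfy
\begin{align}
(1-p)\,e^{-\median_Z/\No} + p\,e^{-\median_Z/(\No+\Eh)} = \frac{1}{2}. \notag
\end{align}
Unlike the two upper-bound lemmas above, which either dropped a nonnegative term or used concavity of $g(r)=e^{-1/r}$ in its argument, here I would bound the left-hand side \emph{from below} by exploiting the convexity of the map $x\mapsto e^{-x}$. This is the crucial and only nonroutine step: because the concavity argument of the second lemma acted on $g$ as a function of $r$ it needed the restriction $p\le\pmax$, whereas applying Jensen's inequality to the convex exponential directly in the exponents needs no such restriction, which is consistent with the present lemma being stated for all $p\in(0,1]$.

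Concretely, I would apply Jensen's inequality with the weights $1-p$ and $p$ (which sum to one) to the two exponents $\median_Z/\No$ and $\median_Z/(\No+\Eh)$. Since $e^{-x}$ is convex, the weighted average of the exponentials dominates the exponential of the weighted average, giving
\begin{align}
\frac{1}{2} = (1-p)\,e^{-\median_Z/\No} + p\,e^{-\median_Z/(\No+\Eh)} \geq \exp\!\left(-\median_Z\!\left(\frac{1-p}{\No}+\frac{p}{\No+\Eh}\right)\right). \notag
\end{align}

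Next I would rewrite the combined coefficient in the exponent using the identity $\frac{\No}{\No+\Eh}=\frac{p}{p+\SNR}$, which follows immediately from $\Eh=\Es/p$ and $\SNR=\Es/\No$. This collapses the exponent coefficient to $\frac{1-p}{\No}+\frac{p}{\No+\Eh}=\frac{1}{\No}\big((1-p)+\frac{p^2}{p+\SNR}\big)$. Taking logarithms of $\frac{1}{2}\geq e^{-(\cdots)}$ and rearranging then yields exactly the claimed bound
\begin{align}
\frac{\log(2)\No}{(1-p)+\frac{p^2}{p+\SNR}} \leq \median_Z. \notag
\end{align}

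I expect the substantive part to be solely the recognition of the correct Jensen direction; once convexity of $e^{-x}$ is invoked, converting the median equality into the desired lower bound is purely mechanical. After that the remaining algebra—the SNR rewriting and the logarithm—is routine, requires no case analysis, and imposes no smallness assumption on $p$, matching the lemma's stated range $p\in(0,1]$.
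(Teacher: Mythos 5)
Your proof is correct and is essentially the paper's own argument: applying Jensen's inequality to the convex function $e^{-x}$ at the two exponents is exactly the same fact as the paper's use of concavity of the exponential CDF $1-e^{-x}$, and your intermediate inequality $\tfrac{1}{2}\geq e^{-(1-p)\median_Z/\No - p\,\median_Z/(\No+\Eh)}$ is literally the paper's second display. The remaining steps (taking logarithms and using $\No/(\No+\Eh)=p/(p+\SNR)$) also match the paper, including your observation that, unlike \fref{lem:smollemma2}, no restriction on $p$ is needed.
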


\begin{proof}
We start from the definition of the median as in \fref{eq:niceform}.
Since the exponential CDF $F_B(b) \define 1 - e^{-\frac{b}{{E_a}}}$ for ${E_a}\geq0$ is concave in $b$, Jensen's inequality leads to
\begin{align}
1 - e^{-(1-p)\frac{\median_Z}{\No}-p\frac{\median_Z}{\No+\Eh}} & \conditionaltextstyle \geq \frac{1}{2}.
\end{align}
We can simplify this expression  to obtain the following bound
\begin{align}
\conditionaltextstyle \frac{1}{2} & \conditionaltextstyle\geq  e^{-(1-p)\frac{\median_Z}{\No}-p\frac{\median_Z}{\No+\Eh}} \\
\log(1/2) &\conditionaltextstyle \geq -(1-p)\frac{\median_Z}{\No}-p\frac{\median_Z}{\No+\Eh},
\end{align}
which leads to the inequality in \fref{lem:smollemma3} we wanted to prove.
\end{proof}

\subsection{Combining the Results}
For all values of $p\in(0,1]$ and $\SNR\geq0$, we have that
\begin{align}
\conditionaltextstyle \frac{\median_Z}{\log(2)}\left( (1-p)+\frac{p^2}{p+\SNR} \right) \leq \frac{\median_Z}{\log(2)}, \label{eq:uperuperbound}
 \end{align}
and we defined $\estimatedNo$ such that $\plimNo {\median_Z}/{\log(2)}$ according to \fref{lem:convergenceofmedian}.

Finally, we can combine \fref{eq:uperuperbound} with \fref{lem:smollemma1}, \fref{lem:smollemma2} and \fref{lem:smollemma3} to obtain~\fref{eq:yummysandwichbound}.

\section{Proof of \fref{cor:errorbound}}
\label{app:errorproof}

\begin{proof}
Let the relative error of \fref{est:noisevariance} be $\protect{\varepsilon \define {|\estimatedNo-\No|}/{\No}}$.
Using the inequalities from \fref{thm:mainresult} and the quantities $\LB$ and $\UB$ defined there, we can bound $\varepsilon$~as follows:
\begin{align} \label{eq:relativeerrorbound}
\conditionaltextstyle \frac{\displaystyle\estimatedNolargeD-\UB}{\UB} \leq \displaystyle \lim_{D\to\infty}\varepsilon \leq \conditionaltextstyle \frac{\displaystyle\estimatedNolargeD-\LB}{\LB}.
\end{align}
By using $\plimNo {\median_Z}/{\log(2)}$ and replacing $\LB$ from \fref{eq:LB} and $\UB$ from \fref{eq:UB} into \fref{eq:relativeerrorbound}, after some simplifications, we obtain \fref{eq:errorbound}.
\end{proof}
%%%%%%%%%%%%%%%%%%%%%%%%%%%%%%%%%%%%%%%%%%%%%%%%%%%%%%%%%%%%%%%%%%%
%%%%%%%%%%%%%%%%%%%%%%%%%%%%%%%%%%%%%%%%%%%%%%%%%%%%%%%%%%%%%%%%%%%

\balance
%%%%%%%%%%%%%%%%%%%%%%%%%%%%%%%%%%%%%%%%%%%%%%%%%%%%%%%%%%%%%%%%%%%
%% BIBLIOGRAPHY %%
%%%%%%%%%%%%%%%%%%%%%%%%%%%%%%%%%%%%%%%%%%%%%%%%%%%%%%%%%%%%%%%%%%%

%%%%%%%%%%%%%%%%%%%%%%%%%%%%%%%%%%%%%%%%%%%%%%%%%%%%%%%%%%%%%%%%%%%
%%%%%%%%%%%%%%%%%%%%%%%%%%%%%%%%%%%%%%%%%%%%%%%%%%%%%%%%%%%%%%%%%%%
%\balance

\end{document}